\newtheorem{theorem}{Theorem}[section]
\newtheorem{proposition}[theorem]{Proposition}
\newtheorem{lemma}[theorem]{Lemma}
\newtheorem{corollary}[theorem]{Corollary}
\theoremstyle{definition}
\newtheorem{definition}[theorem]{Definition}
\newtheorem{assumption}[theorem]{Assumption}
\theoremstyle{remark}
\DeclareMathOperator*{\argmin}{arg\,min}
\title{Approximate Bilevel Difference Convex Programming for \\ Bayesian Risk Markov Decision Processes}
\author {
    Yifan Lin\textsuperscript{\rm 1},
    Enlu Zhou\textsuperscript{\rm 1}
}
\begin{document}

\maketitle

\begin{abstract}
We consider infinite-horizon Markov Decision Processes where parameters, such as transition probabilities, are unknown and estimated from data. The popular distributionally robust approach to addressing the parameter uncertainty can sometimes be overly conservative. In this paper, we utilize the recently proposed formulation, Bayesian risk Markov Decision Process (BR-MDP), to address parameter (or epistemic) uncertainty in MDPs. To solve the infinite-horizon BR-MDP with a class of convex risk measures, we propose a computationally efficient approach called approximate bilevel difference convex programming (ABDCP). The optimization is performed offline and produces the optimal policy that is represented as a finite state controller with desirable performance guarantees. We also demonstrate the empirical performance of the BR-MDP formulation and the proposed algorithm. 
\end{abstract}

\section{Introduction}
In a Markov decision process (MDP), an agent must make decisions in a sequence while facing uncertainty. In this situation, some parameters of the MDP, such as the transition probabilities and costs, may be unknown and must be estimated from available data. The problem then becomes how to determine the best course of action, given the limited or possibly absent data, in order to minimize the expected total cost and optimize the decision-making process under these uncertain parameters.

An alternative approach to addressing the epistemic uncertainty in MDP is through the use of distributionally robust MDPs (DR-MDP, \citet{xu2010distributionally}). It considers unknown parameters as random variables and assumes that their distributions belong to an ambiguity set determined by the available data. The optimal policy is then found by minimizing the expected total cost using the most adversarial distribution within this ambiguity set. However, these distributionally robust approaches may lead to overly conservative solutions that do not perform well in scenarios that are more likely to occur than the worst case. Additionally, the DR-MDP framework does not explicitly incorporate the dynamics of the problem, as the distribution of the unknown parameters does not depend on the data process, and is therefore not time-consistent, as noted by \citet{shapiro2021tutorial}. In light of these limitations, \citet{lin2022bayesian} propose a Bayesian risk MDP (BR-MDP) framework to address epistemic uncertainty in MDPs. However, the approximation algorithm proposed by \citet{lin2022bayesian} only applies to finite-horizon MDPs and does not scale well with long horizon. It only provides an upper bound on the exact value function, without any theoretical guarantee on the gap.

We reformulate the considered BR-MDP as a bilevel difference convex program (DCP) such that we can employ the powerful optimization methods for DCP to solve infinite-horizon BR-MDP. Since the space of posterior distributions (beliefs) is uncountably infinite, we approximate the bilevel DCP by considering only a subset of posterior distributions. Although the DCP is approximate, we show that its solution is a lower bound on the exact optimal value function. Using the representation of a finite state controller of the resulting policy, we further show an upper bound on the exact optimal value function. We develop an iterative approach to reduce the gap between upper and lower bounds by incrementally generating new sets of posterior distributions, and show the convergence of the proposed algorithm. 

To summarize, the contributions of this paper are two-fold. First, we analyze the infinite-horizon MDP with epistemic uncertainty under BR-MDP via a Bayesian perspective and show the existence and uniqueness of stationary optimal policy. Second, we propose an approximate difference convex programming algorithm to solve the proposed formulation and show the convergence of the proposed algorithm. The rest of the paper is organized as follows. We conduct literature review and introduce the BR-MDP framework in Section \ref{sec: literature}. We show the existence and uniqueness of a stationary optimal policy to the infinite-horizon BR-MDP in Section \ref{sec: Bellman}. We provide a bilevel DCP solution to the infinite-horizon BR-MDP in Section \ref{sec: DCP}. A computationally efficient approximate DCP algorithm is shown in Section \ref{sec: approximate DCP}. We verify the theoretical results and demonstrate the performance of our algorithms via numerical experiments in Section \ref{sec: experiment}. Finally, we conclude the paper in Section \ref{sec: conclusions}.

\section{Background}\label{sec: literature}
\subsection{Related Literature}
If data used to estimate the true but unknown underlying MDP are not sufficient, the estimated MDP may significantly differ from the true MDP, leading to poor policy performance. This discrepancy (between the estimated MDP and the true MDP) can be seen tightly linked to the epistemic uncertainty about the model. There have been numerous approaches that address epistemic uncertainty in MDPs, with robust MDP and its variants (\citet{nilim2003robustness, iyengar2005robust, delage2010percentile, wiesemann2013robust, petrik2019beyond, zhou2021finite, yang2022toward, cousins2024percentile, derman2020bayesian}) being one of the most widely used methods. In robust MDPs, the optimal decisions are made based on their performance under the most unfavorable conditions within a known ambiguity set of possible parameter values.

Apart from the overly conservative robust MDP approach which only considers the worst-case scenario, the risk-averse approach has been proposed to address the epistemic uncertainty, but with more flexibility in choosing the risk functional. Risk-averse approach is originally proposed to address the aleatoric uncertainty due to the inherent stochasticity of the underlying MDP (\citet{howard1972risk, ruszczynski2010risk, petrik2012approximate, osogami2012robustness}). It replaces the risk-neutral expectation by some general risk measures, such as conditional value-at-risk (CVaR, see \citet{rockafellar2000optimization}). However, most of the existing approaches assume the agent has access to the true underlying MDP, and optimize some risk measures such as CVaR in that single MDP (\citet{chow2014algorithms, tamar2015policy, tamar2015optimizing, sharma2019robust}). In this paper, we consider the offline planning problem in MDPs, where we only have access to prior belief distribution over MDPs that is constructed by the offline data. It should be noted that offline planning problem has also been considered by \citet{duff2002optimal}, where the author proposes a Bayes-adaptive MDP (BA-MDP) formulation with an augmented state composed of the underlying MDP state and the posterior distribution of the unknown parameters. Mostly close to the problem setting in this work are \citet{rigter2021risk, lin2022bayesian}. \citet{rigter2021risk} optimize a CVaR risk functional over the total cost and simultaneously addresses both epistemic and aleatoric uncertainty, while \citet{lin2022bayesian} consider a nested risk functional to ensure the time consistency of the obtained policy. 

While there are many works proposing different models and frameworks to address the epistemic uncertainty, developing computationally efficient solutions is also of great interest. In robust MDPs, with some mild conditions on the ambiguity set such as rectangularity, the proposed formulation can be solved by a second-order cone program when the horizon is finite, or policy iteration when the horizon is infinite (\citet{mannor2019data}). In BA-MDP and its variants, \citet{rigter2021risk} propose an approximate algorithm based on Monte Carlo tree search and Bayesian optimization. \citet{lin2022bayesian} develop an $\alpha$-function approximation algorithm using the convexity of the CVaR risk measure. However, the aforementioned works consider a finite-horizon MDP and do not generalize well to the infinite-horizon setting. 

Compared to standard MDPs, our considered problem has two distinct features that make it difficult to apply value iteration, policy iteration, or linear programming (\citet{puterman2014markov}). First is the resulting continuous-state MDP due to augmented belief state. We note that this continuous-state MDP is similar to a belief-MDP, which is the equivalent way to represent a partially observable MDP (POMDP) by treating the posterior distribution of the hidden state as a belief state. Second is the risk measure taken with respect to the unknown parameters in the MDPs. In this work, we propose an optimization-based method to solve the infinite-horizon BR-MDPs. It has been empirically shown by \citet{alagoz2015optimally} that linear programming can efficiently solve a significant number of MDPs in comparison to standard dynamic programming methods, such as value iteration and policy iteration. Furthermore, linear programming requires less memory and can handle MDPs with a larger number of states and still achieve optimality. It has been widely used in risk-sensitive MDP (that deals with intrinsic or aleatoric uncertainty that is due to the inherent stochasticity of the underlying MDP, see \citet{zhang2021cautious}). Works that are most related to our proposed optimization-based approach include \citet{poupart2015approximate} who propose an approximate linear programming algorithm for the risk-neutral constrained POMDPs, and \citet{ahmadi2021constrained} who propose a difference convex program (DCP) for the constrained risk-averse MDPs. Our approach for infinite-horizon BR-MDP significantly differs from the above approaches in two aspects. First, compared to the linear programming approach in \citet{poupart2015approximate}, we use bilevel DCP, due to the additional risk measure used for mitigating the epistemic uncertainty. Our considered risk measure brings additional challenge to exactly evaluating the policy, whereas policy evaluation can be easily solved by a system of linear equations in \citet{poupart2015approximate}. Second, compared to the DCP for the risk-averse MDP with aleatoric uncertainty in \citet{ahmadi2021constrained}, the resulting continuous-state MDP in our problem has an infinite number of constraints and requires appropriate approximation to make the problem computationally feasible.

\subsection{Preliminary: Bayesian Risk MDPs}
Consider an infinite-horizon MDP that is defined as $(\mathcal{S}, \mathcal{A}, P, C, \gamma)$, where $\mathcal{S}$ is the state space, $\mathcal{A}$ is the action space, $P$ is the transition probability with $P(s'|s,a)$ denoting the probability of transitioning to state $s'$ from state $s$ when action $a$ is taken, $C$ is the cost function with $C(s,a,s')$ denoting the cost when action $a$ is taken and state transitions from $s$ to $s'$, $0 \leq \gamma < 1$ is the discount factor. We assume the state space and action space are finite and the cost is bounded. A Markovian deterministic policy $\pi$ is a function mapping from $\mathcal{S}$ to $\mathcal{A}$. Given an initial state $s$, the goal is to find an optimal policy that minimizes the expected discounted total cost: $\underset{\pi}{\min} \mathbb{E}^{\pi, P, C}\left[\sum_{t=1}^{\infty} \gamma^{t-1} C\left(s_{t}, a_{t},s_{t+1}\right)|s_1=s\right]$, where $\mathbb{E}^{\pi, P, C}$ is the expectation with policy $\pi$ when the transition probability is $P$ and the cost is $C$. In practice, $P$ and $C$ are often unknown and estimated from data.

BR-MDP is a recently proposed framework that deals with the epistemic uncertainty in MDPs (see \citet{lin2022bayesian}). It is assumed that the state transition is specified by the state equation $s'=g(s,a,\xi)$ with a known transition function $g$ which involves state $s \in \mathcal{S}\subseteq \mathbb{R}^{k_s}$, action $a \in \mathcal{A}\subseteq \mathbb{R}^{k_a}$, and randomness $\xi \in \Xi \subseteq \mathbb{R}^{k_\xi}$, where $k_s$, $k_a$, $k_{\xi}$ are the dimensions of the state, action, and randomness space, respectively. The state equation together with the distribution of $\xi$ uniquely determines the transition probability of the MDP, i.e., $P(s'\in S'|s,a)=P(\{\xi \in \Xi: g(s,a,\xi)\in S'\}|s,a)$, where $S'$ is a measurable set in $\mathcal{S}$. We refer the readers to Chapter 3.5 in \citet{puterman2014markov} for the equivalence between stochastic optimal control and MDP formulation. We use the representation of state equations instead of transition probabilities in MDPs, for the purpose of decoupling the randomness and the policy, leading to a cleaner formulation in the nested form. The cost is assumed to be a function of state $s$, action $a$, and randomness $\xi$, i.e., $C(s,a,\xi)$. 

The distribution of $\xi$, denoted by $f(\cdot;\theta^c)$, is assumed to belong to a parametric family $\left\{f(\cdot;\theta)| \theta \in \Theta \right\}$, where $\Theta \subseteq \mathbb{R}^{d}$ is a convex parameter space, $d$ is the dimension of the parameter space $\Theta$, and $\theta^c \in \Theta$ is the true but unknown parameter value. Many real-world problems exhibit the characteristic of relying on a parametric assumption. For example, it is commonly assumed that the demand of customers follows a Poisson distribution with an unknown arrival rate in inventory control. We begin by assuming a prior distribution, denoted by $\mu$, over the parameter space $\Theta$. This prior accounts for the uncertainty of the parameter estimate that comes from an initial set of data, and it can also take expert opinions into consideration. Then, given an observed realization of the data process, we update the posterior distribution $\mu$ according to the Bayes' rule. Let the policy be a sequence of mappings from state $s$ and posterior $\mu$ to the action space, i.e., $\pi = \{\pi: \mathcal{S} \times \mathcal{M} \to \mathcal{A}\}$, where $\mathcal{M}$ is the space of posterior distributions. This representation implies the policy is stationary. Now we present the BR-MDP formulation below. 
\begin{align}\label{formulation: BR-MDP}
    & \underset{\pi}{\min} \hspace{0.2cm} \rho_{\mu_1} \mathbb{E}_{\theta_1} \Big[C_1 (s_1,a_1,\xi_1)  + \cdots  + \gamma^{t-1} \rho_{\mu_{t}} \mathbb{E}_{\theta_{t}} \big[ \nonumber \\ 
    & ~~~~~~ C_t(s_{t}, a_{t}, \xi_{t}) + \cdots \big] | s_1=s, \mu_1=\mu \Big]  \\
    & \text{s.t.} \hspace{0.2cm} ~~~~~s_{t+1} = g(s_t,a_t,\xi_t), \label{state_eqn} \\
    & ~~~~~~~~~~~~~~~~ a_{t} = \pi(s_t,\mu_t), \nonumber \\
    &~~~~~~~\mu_{t+1}(\theta) = \frac{\mu_{t}(\theta) f\left(\xi_{t};\theta \right)}{\int_{\Theta} \mu_{t}(\theta) f\left(\xi_{t};\theta \right)d\theta},
    \label{posterior_update} 
\end{align}
where $\rho$ is a risk measure (we defer the definition and form of the risk measure $\rho$ to Section~\ref{sec: risk measure}), $\theta_t$ is a random vector following distribution $\mu_t$, $\mathbb{E}_{\theta_t}$ denotes the expectation with respect to $\xi_t \sim f(\cdot;\theta_t)$ conditional on $\theta_t$, and $\rho_{\mu_t}$ denotes a risk functional with respect to $\theta_t \sim \mu_t$ applied in nested form to the expected total cost with respect to the Bayesian posterior distributions of the unknown parameters. Equation~\eqref{state_eqn} is the transition of the state $s_t$, and without loss of generality we assume the initial state $s_1$ takes a deterministic value $s$. Equation~\eqref{posterior_update} is the updating of the posterior $\mu_t$. 
For a given dataset with size $N$, the prior distribution converges to a Dirac delta function concentrated on the true parameter $\theta^c$ with probability 1, and the optimal value function of BR-MDP converges to the optimal value function of the true MDP. 
 
\subsection{Preliminary: Risk Measure}\label{sec: risk measure}
Let $(\Omega, \mathcal{F}, \mathbb{P})$ be a probability space and $\mathcal{Z}$ be a linear space of $\mathcal{F}$-measurable functions $Z: \Omega \rightarrow \mathbb{R}$. A risk measure is a function $\rho: \mathcal{Z} \to \mathbb{R}$ which assigns a random variable $Z$ to a real number representing its risk. It is said that risk measure $\rho$ is convex if it possesses the properties of convexity, monotonicity, and translation invariance (see \citet{follmer2002convex}). In this paper we consider a class of convex risk measures which can be represented in the following parametric form: $\rho_{\mu}(Z):=\inf_{\phi \in \Phi} \mathbb{E}_{\mu}[\Psi(Z, \phi)]$, where $\Phi \subset \mathbb{R}^{m}$ and $\Psi: \mathbb{R} \times \Phi \to \mathbb{R}$ is a real-valued convex function, and $\Psi(\cdot,\phi)$ is finite-valued and continuous on a compact set of $\phi$. There is a large class of risk measures which can be represented in the parametric form. For example, conditional value-at-risk (CVaR), defined as $\operatorname{CVaR}_{\alpha}(X)=\min_{\phi \in \mathbb{R}} \left\{\phi+\frac{1}{1-\alpha} \mathbb{E}\left[(X-\phi)^{+}\right]\right\}$, where $(\cdot)^{+}$ stands for $\max (0,\cdot)$, is widely used (see \citet{rigter2021risk, chow2015risk}). Another example is risk measures constructed from $\phi$-divergence ambiguity sets (see Example 3 in \citet{guigues2021risk}). We refer the readers to \citet{shapiro2021lectures} for a comprehensive discussion. 

\section{Algorithm and Analysis}
\subsection{Bellman Equation and Optimality}\label{sec: Bellman}
We can write the value function under policy $\pi$ of BR-MDP in the following recursive forms. 
\begin{align*}
    & V^{\pi}(s,\mu) = \rho_{\mu}\mathbb{E}_{\theta}\big[C(s,a,\xi) + \gamma V^{\pi}(s',\mu')\big] \\
    & \text{s.t.} \hspace{0.2cm} ~~~~s'=g(s,a,\xi), a=\pi(s,\mu); \\
    & ~~~~~ \mu'(\theta)=\frac{\mu(\theta) f\left(\xi;\theta \right)}{\int_{\Theta} \mu(\theta) f\left(\xi;\theta \right)d\theta}.
\end{align*}
We refer the readers to \citet{lin2022bayesian} for a discussion on the preference of dynamic risk measure over static risk measure in consideration of time consistency and derivation of the Bellman equation. For simplicity we only consider deterministic policies, but all the analysis below can be extended to stochastic policies. For the stochastic policies, the expectation in \eqref{formulation: BR-MDP} is taken with respect to the randomness $\xi$ and the action $a$. As a consequence of Theorem 5.5.3b in \citet{puterman2014markov}, it is sufficient to consider the Markovian policy. The optimal value function is then denoted as $V^{*}(s,\mu)=\min_{\pi \in \Pi^{MD}} V^{\pi}(s,\mu)$, where $\Pi^{MD}$ is the set of Markovian deterministic policies. It should be noted that the Bayes optimality is with respect to the prior belief $\mu$. In the following, we derive the intermediate results to show $V^{*}$ is the unique optimal value function to the infinite-horizon BR-MDP.

\begin{definition}[Bellman Operator]\label{Bellman operator}
Let $B(\mathcal{S},\mathcal{M})$ be the space of real-valued bounded measurable functions on $(\mathcal{S} \times \mathcal{M})$. For any bounded value function $V \in B(\mathcal{S}, \mathcal{M})$, define an operator $\mathcal{T}: B(s,\mu) \to B(s,\mu)$ as:
\begin{align*}
    (\mathcal{T} V) (s,\mu) = \min_{a \in \mathcal{A}} \rho_{\mu}\left[\mathbb{E}_{\theta}\left[C(s,a,\xi)+\gamma V(s',\mu')\right]\right].
\end{align*}
Also let $\mathcal{T}^{\pi}: B(s,\mu) \to B(s,\mu)$, where
\begin{align*}
    (\mathcal{T}^{\pi} V) (s,\mu) = \rho_{\mu}\left[\mathbb{E}_{\theta}\left[C(s,\pi(s,\mu),\xi)+\gamma V^{\pi}(s',\mu')\right]\right].
\end{align*}
\end{definition}

The next two lemmas show the above Bellman operators are monotonic and contraction mappings.

\begin{lemma}[Monotonicity]\label{Monotonicity}
The operators $\mathcal{T}^{\pi}$ and $\mathcal{T}$ are monotonic, in the sense that $V \leq V'$ implies $\mathcal{T}^{\pi} V \leq \mathcal{T}^{\pi} V'$ and $\mathcal{T} V \leq \mathcal{T} V'$.
\end{lemma}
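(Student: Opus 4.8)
The plan is to observe that each operator is built as a composition of order-preserving maps, so the hypothesis $V \le V'$ propagates through the composition with no inequality reversing. I would handle $\mathcal{T}^{\pi}$ first. Fix $(s,\mu)$ and a pair $V \le V'$ in $B(\mathcal{S},\mathcal{M})$, and set $a = \pi(s,\mu)$. For this fixed $a$, the successor pair $(s',\mu')$ is a deterministic function of the realization $\xi$, via the state equation $s' = g(s,a,\xi)$ and the posterior update rule. Hence the pointwise hypothesis $V \le V'$ gives $V(s',\mu') \le V'(s',\mu')$ for every $\xi$; and since $\gamma \ge 0$, this yields $C(s,a,\xi)+\gamma V(s',\mu') \le C(s,a,\xi)+\gamma V'(s',\mu')$ for every $\xi$ and every $\theta$.

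Next I would push this pointwise inequality through the two aggregation steps in the operator. The inner map $\mathbb{E}_{\theta}$ is an expectation over $\xi \sim f(\cdot;\theta)$ conditional on $\theta$, and is therefore monotone, so $\mathbb{E}_{\theta}[C+\gamma V] \le \mathbb{E}_{\theta}[C+\gamma V']$ as an inequality between functions of $\theta$. The outer map $\rho_{\mu}$ is a convex risk measure, and monotonicity is one of its three defining properties (convexity, monotonicity, translation invariance, as recalled in the preliminary on risk measures), so applying $\rho_{\mu}$ preserves the inequality and gives $(\mathcal{T}^{\pi}V)(s,\mu) \le (\mathcal{T}^{\pi}V')(s,\mu)$. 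Since $(s,\mu)$ was arbitrary, $\mathcal{T}^{\pi}V \le \mathcal{T}^{\pi}V'$.

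For $\mathcal{T}$, I would note that $(\mathcal{T}V)(s,\mu) = \min_{a \in \mathcal{A}} \rho_{\mu}[\mathbb{E}_{\theta}[C+\gamma V]]$ differs from the $\mathcal{T}^{\pi}$ computation only by the outer minimization over actions. The inequality just established holds for each fixed $a \in \mathcal{A}$, and taking the minimum over $a$ on both sides preserves the order, since $h(a) \le h'(a)$ for all $a$ implies $\min_{a} h(a) \le \min_{a} h'(a)$. This gives $\mathcal{T}V \le \mathcal{T}V'$ and completes the argument.

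The only nontrivial ingredient is the monotonicity of $\rho_{\mu}$; everything else is elementary order-preservation. If one prefers to verify this directly from the parametric representation $\rho_{\mu}(Z) = \inf_{\phi \in \Phi}\mathbb{E}_{\mu}[\Psi(Z,\phi)]$ rather than invoking the abstract definition, it suffices that $\Psi(\cdot,\phi)$ be nondecreasing in its first argument for each $\phi$ (which holds for the worked examples such as CVaR), since the infimum of a family of nondecreasing maps is again nondecreasing. Taking monotonicity of the convex risk measure as given, the proof is a direct chain of order-preserving operations and presents no real obstacle.
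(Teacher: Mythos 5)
Your proof is correct and rests on the same key ingredient as the paper's: the monotonicity of the convex risk measure $\rho_{\mu}$, combined with the monotonicity of $\mathbb{E}_{\theta}$ and of the pointwise minimum over actions. The only difference is presentational --- the paper writes $\mathcal{T}^{\pi}V$ in an unrolled, multi-stage nested form and propagates the inequality from the terminal value function, whereas you push the pointwise inequality $V \leq V'$ directly through the single composition $\rho_{\mu}\circ\mathbb{E}_{\theta}$ that actually defines the operator; your one-step version is, if anything, tighter, since one application of $\mathcal{T}^{\pi}$ involves only one layer of nesting.
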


\begin{lemma}[Contraction Mapping]\label{Contraction mapping}
The operators $\mathcal{T}^{\pi}$ and $\mathcal{T}$  are $\gamma$ contraction for $||\cdot||_{\infty}$ norm. That is, for any two bounded value functions $V, V' \in B(\mathcal{S}, \mathcal{M})$, we have 
\begin{align*}
    ||\mathcal{T}^{\pi} V - \mathcal{T}^{\pi} V'||_{\infty} \leq \gamma ||V-V'||_{\infty}.
\end{align*}
\end{lemma}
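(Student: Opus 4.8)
The plan is to reduce the contraction property to two structural features of any convex risk measure — \emph{monotonicity} and \emph{translation invariance} — both of which are guaranteed by the definition of a convex risk measure recalled above. Concretely, monotonicity gives that $Z \le Z'$ implies $\rho_{\mu}(Z) \le \rho_{\mu}(Z')$, and translation invariance gives $\rho_{\mu}(Z + c) = \rho_{\mu}(Z) + c$ for any constant $c$. I would first establish the bound for $\mathcal{T}^{\pi}$ and then upgrade it to $\mathcal{T}$.

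For $\mathcal{T}^{\pi}$, fix $(s,\mu)$, write $a = \pi(s,\mu)$, and set $\epsilon := \|V - V'\|_{\infty}$. The key pointwise inequality is $V(s',\mu') \le V'(s',\mu') + \epsilon$, holding uniformly over all successor pairs $(s',\mu')$. I would push this through the two layers of the operator in order. First, since the inner expectation $\mathbb{E}_{\theta}$ is linear and monotone, $\mathbb{E}_{\theta}[C(s,a,\xi) + \gamma V(s',\mu')] \le \mathbb{E}_{\theta}[C(s,a,\xi) + \gamma V'(s',\mu')] + \gamma\epsilon$, where the cost term is identical on both sides and the constant $\gamma\epsilon$ survives because it does not depend on $\theta$. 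Second, applying $\rho_{\mu}$ and using monotonicity followed by translation invariance (so $\gamma\epsilon$ factors out of the risk functional) gives $(\mathcal{T}^{\pi} V)(s,\mu) \le (\mathcal{T}^{\pi} V')(s,\mu) + \gamma\epsilon$. Swapping the roles of $V$ and $V'$ yields the reverse inequality, so $|(\mathcal{T}^{\pi} V)(s,\mu) - (\mathcal{T}^{\pi} V')(s,\mu)| \le \gamma\epsilon$; taking the supremum over $(s,\mu)$ finishes this case.

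For the optimality operator $\mathcal{T}$, I would introduce the action-indexed quantity $Q_{V}(s,\mu,a) := \rho_{\mu}[\mathbb{E}_{\theta}[C(s,a,\xi) + \gamma V(s',\mu')]]$, so that $(\mathcal{T} V)(s,\mu) = \min_{a} Q_{V}(s,\mu,a)$. The computation just performed shows $|Q_{V}(s,\mu,a) - Q_{V'}(s,\mu,a)| \le \gamma\epsilon$ for every $a$. The claim then follows from the elementary inequality $|\min_{a} Q_{V}(s,\mu,a) - \min_{a} Q_{V'}(s,\mu,a)| \le \max_{a} |Q_{V}(s,\mu,a) - Q_{V'}(s,\mu,a)|$, together with finiteness of $\mathcal{A}$, which guarantees the minima are attained.

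The routine parts are the two monotone/linear pushes; the one step deserving care is confirming that the additive constant $\gamma\epsilon$ genuinely passes through both the inner conditional expectation and the outer risk functional. This is where translation invariance of $\rho_{\mu}$ is essential, and I would verify it directly from the parametric form $\rho_{\mu}(Z) = \inf_{\phi\in\Phi}\mathbb{E}_{\mu}[\Psi(Z,\phi)]$ if needed, rather than taking it for granted — it is the only place the specific structure of the admitted risk measures enters, and hence the main thing to get right.
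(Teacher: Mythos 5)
Your proposal is correct and takes essentially the same approach as the paper's proof: both arguments sandwich $V'$ between $V \pm \|V-V'\|_{\infty}$, pass the constant through the inner expectation by linearity and through $\rho_{\mu}$ by monotonicity plus translation invariance of the convex risk measure, and then take the supremum over $(s,\mu)$. Your explicit handling of $\mathcal{T}$ via $|\min_{a} Q_{V} - \min_{a} Q_{V'}| \leq \max_{a}|Q_{V} - Q_{V'}|$ simply spells out the step the paper compresses into ``same analysis works for operator $\mathcal{T}$.''
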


The following proposition shows that sub-solutions $V_{\operatorname{sub}}$ and super-solutions $V_{\operatorname{sup}}$ of the optimality equations $V = \mathcal{T}V$ provide lower and upper bounds on $V^{*}$. As a result, when a solution is obtained, both bounds are satisfied, meaning that the solution must be equivalent to $V^{*}$. Additionally, this outcome serves as an important algorithmic tool for optimization-based methods.

\begin{proposition}\label{LP pre-condition}
For any $V_{\operatorname{sub}}, V_{\operatorname{sup}} \in B(\mathcal{S}, \mathcal{M})$, (i) if $V_{\operatorname{sup}} \geq \mathcal{T}V_{\operatorname{sup}}$, then $V_{\operatorname{sup}} \geq V^{*}$; (ii) if $V_{\operatorname{sub}} \leq \mathcal{T}V_{\operatorname{sub}}$, then $V_{\operatorname{sub}} \leq V^{*}$.
\end{proposition}

According to Proposition \ref{LP pre-condition}, we have $V^{*}=\mathcal{T}V^{*}$. By Banach fixed-point theorem, $V^{*}$ is the unique optimal value function to the infinite horizon BR-MDP. We also have that the value $V$ of a stationary policy $\pi$ is the unique bounded solution of the equation $V=\mathcal{T}^{\pi}V$. Similar analysis shows the existence and uniqueness of the optimal stationary policy $\pi^{*}$ that satisfies $V^{*}=\mathcal{T}^{\pi^{*}}V^{*}$. 

Applying the operator $\mathcal{T}$ on any initial value function $V$, we have the value iteration algorithm for the infinite-horizon BR-MDP problem. The following corollary of convergence rate is similar to the standard with the contraction property. 

\begin{corollary}
For any initial bounded value function $V$, the convergence rate is shown to be $||(\mathcal{T}^{k}V)(s,\mu) - V^{*}(s,\mu)||_{\infty} \leq \gamma^{k} ||V(s,\mu)-V^{*}(s,\mu)||_{\infty}$.
\end{corollary}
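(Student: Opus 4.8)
The plan is to leverage the contraction property established in Lemma~\ref{Contraction mapping} together with the fixed-point characterization $V^{*}=\mathcal{T}V^{*}$ noted after Proposition~\ref{LP pre-condition}, and then proceed by a short induction on $k$. Since the statement is essentially the standard geometric-convergence guarantee for value iteration transported to the belief-augmented setting, the entire argument reduces to repeated application of an estimate that is already available.

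First I would record the key observation that, because $V^{*}$ is a fixed point of $\mathcal{T}$, iterating the operator leaves it unchanged: $\mathcal{T}^{k}V^{*}=V^{*}$ for every $k\ge 1$. This allows the quantity of interest to be rewritten as a difference between the $k$-fold images of $V$ and $V^{*}$, namely $\|(\mathcal{T}^{k}V)(s,\mu)-V^{*}(s,\mu)\|_{\infty}=\|(\mathcal{T}^{k}V)(s,\mu)-(\mathcal{T}^{k}V^{*})(s,\mu)\|_{\infty}$, which is the form on which the contraction estimate directly acts.

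Next I would establish the bound by induction on $k$. For the base case $k=1$, the inequality $\|\mathcal{T}V-\mathcal{T}V^{*}\|_{\infty}\le \gamma\|V-V^{*}\|_{\infty}$ is exactly the content of Lemma~\ref{Contraction mapping}. For the inductive step, assuming $\|\mathcal{T}^{k}V-V^{*}\|_{\infty}\le \gamma^{k}\|V-V^{*}\|_{\infty}$, I would apply the contraction property once more, treating $\mathcal{T}^{k}V$ and $\mathcal{T}^{k}V^{*}=V^{*}$ as the two input functions, so that $\|\mathcal{T}^{k+1}V-V^{*}\|_{\infty}=\|\mathcal{T}(\mathcal{T}^{k}V)-\mathcal{T}(\mathcal{T}^{k}V^{*})\|_{\infty}\le \gamma\|\mathcal{T}^{k}V-V^{*}\|_{\infty}$; combining this with the inductive hypothesis yields the factor $\gamma^{k+1}$ and closes the induction.

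Because everything reduces to the already-proven contraction estimate, I do not anticipate a genuine obstacle. The only points requiring care are to state explicitly that $V^{*}$ is a fixed point so that the telescoping through $\mathcal{T}^{k}V^{*}=V^{*}$ is legitimate, and to note that boundedness of the initial $V$ guarantees every iterate $\mathcal{T}^{k}V$ remains in $B(\mathcal{S},\mathcal{M})$, so the operator and the $\|\cdot\|_{\infty}$ norm are well defined at each stage of the recursion.
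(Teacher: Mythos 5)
Your proof is correct and follows exactly the route the paper intends: the paper itself does not spell out this corollary's proof, noting only that it is ``standard with the contraction property,'' and your argument---using $V^{*}=\mathcal{T}V^{*}$ to rewrite the difference as $\|\mathcal{T}^{k}V-\mathcal{T}^{k}V^{*}\|_{\infty}$ and then inducting via Lemma~\ref{Contraction mapping}---is precisely that standard argument, carefully filled in.
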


\subsection{Bilevel Difference Convex Programming}\label{sec: DCP}
The main challenge of executing the value iteration algorithm (and similarly policy iteration algorithm) lies in the continuous augmented state. In this work, we propose an optimization-based method to solve the infinite-horizon BR-MDPs. According to Proposition \ref{LP pre-condition}, the infinite-horizon BR-MDP can be solved as follows:
\begin{align*}
    & \max_V \sum_{s \in \mathcal{S},\mu \in \mathcal{M}} \alpha(s,\mu) V(s,\mu) \\
    & ~~ \text{s.t. } V(s,\mu) \leq \rho_{\mu} \mathbb{E}_{\theta} \Big[C(s,a,\xi) + \gamma V(s',\mu')\Big], \\
    & ~~~~~~~~\forall a \in \mathcal{A}, s \in \mathcal{S}, \mu \in \mathcal{M},
\end{align*}
where we choose $\alpha(s,\mu)$ to be positive scalars which satisfy $\sum_{s \in \mathcal{S}, \mu \in \mathcal{M}} \alpha(s,\mu) = 1$. For the considered class of convex risk measures, we can rewrite the above formulation as a bilevel difference convex program:
\begin{align}
    & \min_V \quad -\sum_{s \in \mathcal{S},\mu \in \mathcal{M}} \alpha(s,\mu) V(s,\mu) \label{DCP} \\
    & \text{s.t.} V(s,\mu) - \min_{\phi} \mathbb{E}_{\mu}\Big[\Psi\big(\mathbb{E}_{\theta}[C(s,a,\xi) + \gamma V(s',\mu')], \phi\big)\Big] \nonumber \\
    & ~~~~ \leq 0, \forall a \in \mathcal{A}, s \in \mathcal{S}, \mu \in \mathcal{M} \nonumber.
\end{align}

Since $\Psi(Z,\phi)$ is convex in $(Z,\phi)$ and expectation is a linear operator, the minimum of $\mathbb{E}[\Psi(Z,\phi)]$ over a convex set $\Theta$ remains convex in $Z$. Thus, \eqref{DCP} is a bilevel difference convex program (see \citet{horst1999dc} for the definition of DCP). It should be noted that \citet{ahmadi2021constrained} show that the minimum over $\phi$ can be absorbed into the overall minimum problem, and $\phi$ is treated as a single variable. However, it is clear that the minimum is achieved at different $\phi$ for different augmented state $(s,\mu)$, thus turning \eqref{DCP} into a bilevel optimization problem. When the lower-level problem is convex and satisfies certain regularity conditions, we can use the Karush-Kuhn-Tucker (KKT) conditions to reformulate the lower-level optimization problem, which allows us to transform the original bilevel optimization problem into a single-level (constrained) optimization problem. 

After being reduced to a single-level DCP problem, \eqref{DCP} can be solved by the convex-concave procedure (see \citet{lipp2016variations} for such procedure), wherein the concave terms are replaced by a convex upper bound. We employ the method of disciplined convex-concave programming (DCCP, \citet{shen2016disciplined}), which converts a DCP problem into a disciplined convex program and subsequently into an equivalent cone program. However, one problem remains to be solved: the number of constraints in \eqref{DCP} is infinite, due to the continuous belief state. To tackle this problem, we take a similar approach as \citet{poupart2015approximate}. The main idea is to start with a finite posterior set (belief space) $\hat{\mathcal{M}}$, and then problem \eqref{DCP} can be solved efficiently by DCCP, where the posterior distribution (belief point) not in the set $\hat{\mathcal{M}}$ is replaced by convex combination of the points in $\hat{\mathcal{M}}$. We then iteratively add to the posterior set new posterior distributions that are reachable from the current set and re-solve \eqref{DCP}. It should be noted that the proposed approach could be extended to value iteration and policy iteration, but the analysis would be more complicated, since there would be a trade-off between the optimization (e.g. value iteration) and the belief point generation, and it could be quite tricky to decide the optimal number of steps for value iteration and optimal intervals for belief point generation. We formally introduce the approximate bilevel DCP algorithm in the next section. 

\subsection{Approximate Bilevel Difference Convex Programming}\label{sec: approximate DCP}
Let $\hat{\mathcal{M}}$ be the current posterior set. Let $\mu^{sas'}$ be the one-step posterior distribution with observed randomness $\xi$ indicated by state transition $s'=g(s,a,\xi)$ and current posterior $\mu$. Initially the posterior set is constructed from corner (degenerate) points. In case the parameter space $\Theta$ is finite, the corner points are $(1,0,\cdots,0)$, $(0,1,0,\cdots)$, $\cdots,$ and $(0,\cdots,0,1)$. In case the parameter space is continuous, it is impossible to express one-step posterior distribution (i.e., $\mu^{sas'}$) as a convex combination of those degenerate points. Therefore, we assume the parameter space is finite, which is practical in many real-world problems. It can also be viewed as a discrete approximation of a continuous parameter set, and the discretization can be chosen of any precision.

\begin{algorithm}[htp]
\caption{Approximate Bilevel DCP}
\label{alg: approximate DCP}
\begin{algorithmic}
\STATE {\bfseries input:} posterior set $\hat{\mathcal{M}}$
\STATE {\bfseries output:} policy $\hat{\pi}^{*}$, approximate value function $\hat{V}^{*}$
\STATE 1. solve the following approximate bilevel DCP:
{\small
\begin{align}\label{ADCP}
    & \min_V \quad - \sum_{s \in \mathcal{S},\mu \in \hat{\mathcal{M}}} \alpha(s,\mu) V(s,\mu) \\
    & \text{s.t. } V(s,\mu) \leq \min_{\phi} \sum_{\theta \in \Theta} \mu(\theta)\Big[\Psi\big(\gamma \sum_{\mu' \in \hat{\mathcal{M}}, s' \in \mathcal{S}} P(s'|s,a,\theta) \nonumber \\
    & w(\mu', \mu^{sas'}) V(s',\mu') + C(s,a,\theta), \phi\big)\Big], \forall a \in \mathcal{A}, s \in \mathcal{S}, \mu \in \hat{\mathcal{M}} \nonumber
\end{align}
}where $w(\mu',\mu^{sas'})$ is obtained by solving \eqref{Weight LP}.
\STATE 2. obtain the approximate solution $\hat{V}^{*}$ to \eqref{ADCP}; obtain the approximate policy
{\small
\begin{align*}
    & \hat{\pi}^{*}(s,\mu)=\argmin_{\phi, a \in \mathcal{A}} \sum_{\theta \in \Theta} \mu(\theta)\Big[\Psi\big(\gamma \sum_{\mu' \in \hat{\mathcal{M}}, s' \in \mathcal{S}} P(s'|s,a,\theta)\\
    & w(\mu', \mu^{sas'}) \hat{V}^{*}(s',\mu')  + C(s,a,\theta), \phi\big)\Big], \forall s \in \mathcal{S}, \mu \in \hat{\mathcal{M}}. 
\end{align*}
}
\end{algorithmic}
\end{algorithm}

To interpolate all $\mu^{sas'}$ that can be reached from some $\mu_i \in \hat{\mathcal{M}}$ in one step, we use some convex combination of points $\mu_i$ in $\hat{\mathcal{M}}$. Let $w(\mu_i,\mu^{sas'})$ be the weight $w_i$ associated with $\mu_i$ when interpolating $\mu^{sas'}$. We can use this interpolation weight to define an approximate transition probability for posterior as: 
\begin{align*}
    \tilde{P}(\mu'|s,a,\mu,\theta) = \sum_{s' \in \mathcal{S}} P(s'|s,a,\theta) w(\mu',\mu^{sas'}).
\end{align*}

A sanity check that $\tilde{P}(\mu'|s,a,\mu,\theta)$ is indeed a transition probability: $\sum_{\mu' \in \mathcal{\hat{M}}} \tilde{P}(\mu'|s,a,\mu,\theta) = 1$ and $\tilde{P}(\mu'|s,a,\mu,\theta) \geq 0$. We choose the convex combination that minimizes the weighted Euclidean norm of the difference between $\mu$ and each $\mu_i$ by solving the following linear program:
\begin{align}\label{Weight LP}
    & \min_{w} \sum_{i} w_i ||\mu_i - \mu^{sas'}||_2^2 \\
    & \text{ s.t. } \sum_{i} w_i \mu_i(\theta) = \mu^{sas'}(\theta), \forall \theta \in \Theta \nonumber\\
    & ~~~~~~~~~~~~~~~~ \sum_i w_i = 1, w_i \geq 0, \forall i. \nonumber
\end{align}
With the approximation in the constraint in \eqref{DCP}, we obtain the following approximate bilevel DCP Algorithm~\ref{alg: approximate DCP} for a given posterior set. For ease of notation, we denote by $C(s,a,\theta)=\mathbb{E}_{\theta} [C(s,a,\xi)]$ the average cost at state $s$ when action $a$ is taken, under the parameter value $\theta$. 


\begin{theorem}\label{thm: lower bound}
The approximate value function $\hat{V}^{*}$ found by running Algorithm~\ref{alg: approximate DCP} is a lower bound on the exact optimal value function $V^{*}$. 
\end{theorem}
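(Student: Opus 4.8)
The plan is to identify the minimizer $\hat V^{*}$ of \eqref{ADCP} as the fixed point of a genuine Bellman operator and then to bound that fixed point by the restriction of $V^{*}$ to the grid $\hat{\mathcal M}$. First I would observe that \eqref{ADCP} is exactly the linear-programming characterization of a finite-state MDP on $\mathcal S\times\hat{\mathcal M}$ whose transition kernel is the approximate one, $\tilde P$, with Bellman operator
\[
(\hat{\mathcal T}V)(s,\mu)=\min_{a\in\mathcal A}\rho_\mu\Big[C(s,a,\theta)+\gamma\sum_{s'\in\mathcal S,\,\mu'\in\hat{\mathcal M}}P(s'|s,a,\theta)\,w(\mu',\mu^{sas'})\,V(s',\mu')\Big].
\]
Lemmas \ref{Monotonicity} and \ref{Contraction mapping} apply verbatim to $\hat{\mathcal T}$, so it is a monotone $\gamma$-contraction, and Proposition \ref{LP pre-condition} applied to this finite MDP shows the maximizer $\hat V^{*}$ of \eqref{ADCP} is precisely the unique fixed point $\hat V^{*}=\hat{\mathcal T}\hat V^{*}$.

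The heart of the proof is the single inequality $\hat{\mathcal T}(V^{*}|_{\hat{\mathcal M}})\le V^{*}$ on $\hat{\mathcal M}$. To obtain it I would compare, for each fixed $a$, the random cost inside $\rho_\mu$ in $\hat{\mathcal T}(V^{*}|_{\hat{\mathcal M}})$ with that in $\mathcal TV^{*}$: they differ only in the continuation value, where the approximate operator uses $\sum_{\mu'\in\hat{\mathcal M}}w(\mu',\mu^{sas'})V^{*}(s',\mu')$ in place of $V^{*}(s',\mu^{sas'})$. The equality constraint in \eqref{Weight LP} guarantees $\mu^{sas'}=\sum_{\mu'}w(\mu',\mu^{sas'})\mu'$ is an exact convex combination of grid points, so if $V^{*}(s',\cdot)$ is concave in $\mu$ then $\sum_{\mu'}w(\mu',\mu^{sas'})V^{*}(s',\mu')\le V^{*}(s',\mu^{sas'})$. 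Since $P(s'|s,a,\theta)\ge0$ and $\gamma\ge0$, the full cost inside the risk functional is pointwise (in $\theta$) no larger under the approximation, and the monotonicity of the convex risk measure $\rho_\mu$ then gives $\hat{\mathcal T}(V^{*}|_{\hat{\mathcal M}})\le\mathcal TV^{*}=V^{*}$ on $\hat{\mathcal M}$, the $\min_a$ preserving the inequality.

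I would finish by iteration. Applying the monotone operator $\hat{\mathcal T}$ to $\hat{\mathcal T}(V^{*}|_{\hat{\mathcal M}})\le V^{*}|_{\hat{\mathcal M}}$ yields, by induction, a non-increasing sequence $\hat{\mathcal T}^{k}(V^{*}|_{\hat{\mathcal M}})\le V^{*}|_{\hat{\mathcal M}}$, while the contraction property (the convergence Corollary) forces $\hat{\mathcal T}^{k}(V^{*}|_{\hat{\mathcal M}})\to\hat V^{*}$. Passing to the limit gives $\hat V^{*}\le V^{*}$ on $\hat{\mathcal M}$, which is the claim.

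The main obstacle is the concavity of $V^{*}(s,\cdot)$ in the posterior $\mu$, which is exactly what makes the interpolation underestimate the true continuation value. I would establish it by showing that $\mathcal T$ maps posterior-concave functions to posterior-concave functions and then invoking $\mathcal T^{k}V\to V^{*}$, concavity being closed under pointwise limits. The delicate point is that, unlike the risk-neutral belief-MDP where $P(s'|s,a,\mu)V(s',\mu^{sas'})$ is concave by the standard perspective-function identity for the Bayesian update, here the nonlinear $\Psi(\cdot,\phi)$ is applied to the $\theta$-dependent continuation value \emph{before} the $\mu$-expectation; one must therefore combine the perspective argument for the posterior map with the monotonicity and convexity of $\Psi$ and the fact that $\inf_\phi$ and $\min_a$ preserve concavity. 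Verifying that concavity survives this composition is the crux on which the lower-bound guarantee rests.
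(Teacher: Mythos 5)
Your overall route is the same as the paper's: establish concavity of the value function in the posterior $\mu$, use Jensen's inequality to show the interpolation defined by \eqref{Weight LP} underestimates the true continuation value $V^{*}(s',\mu^{sas'})$, and conclude $\hat{V}^{*}\le V^{*}$ by an LP/fixed-point argument. Your packaging of the concluding step---identifying the maximizer of \eqref{ADCP} with the fixed point of the approximate operator $\hat{\mathcal{T}}$ via Proposition \ref{LP pre-condition}, proving $\hat{\mathcal{T}}(V^{*}|_{\hat{\mathcal{M}}})\le V^{*}|_{\hat{\mathcal{M}}}$ from Jensen plus monotonicity of $\rho_{\mu}$, and iterating with Lemmas \ref{Monotonicity} and \ref{Contraction mapping}---is correct conditional on concavity, and is in fact more self-contained than the paper, which at this point simply invokes Jensen's inequality together with Theorem 12 of \cite{hauskrecht2000value} (the analogous POMDP interpolation bound, with convexity replaced by concavity).

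The genuine gap is exactly the step you label the crux and then leave open: concavity of $V^{*}(s,\cdot)$ in $\mu$ is never proved. Moreover, the route you sketch (show $\mathcal{T}$ maps posterior-concave functions to posterior-concave functions, then pass to the limit) runs into the obstruction you yourself identify and does not resolve it: for a mixed prior $t\mu_1+(1-t)\mu_2$ the Bayes update is a data-dependent mixture $\lambda(\xi)\mu_1'+(1-\lambda(\xi))\mu_2'$, so the induction hypothesis produces a $\xi$-dependent convex combination sitting inside $\mathbb{E}_{\theta}[\cdot]$, and when you try to push it through $\Psi$, the convexity of $\Psi$ in its first argument yields an inequality pointing the wrong way; the telescoping that rescues the risk-neutral belief-MDP argument is destroyed by the nonlinearity. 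The paper does not attempt this induction at all. It obtains concavity in one stroke from the parametric representation of the risk measure: for each fixed $\phi$ (and fixed action), the map $\mu\mapsto\mathbb{E}_{\mu}[\Psi(\cdot,\phi)]$ is affine in $\mu$, so $V^{\pi}(s,\cdot)$, being a pointwise minimum over $\phi$ of affine functions of $\mu$, is concave, and the same holds for $V^{*}$. (To be fair, that computation holds the integrand---including the continuation value $V^{\pi}(s',\mu')$---fixed as a function of $\theta$ while mixing priors, thereby sidestepping the very dependence that worries you; a fully rigorous version would unroll the recursion so that the whole value is an infimum of $\mu$-affine functionals. But this direct ``minimum of affine functions'' argument is the paper's proof of the crux, and it, or some substitute for it, is what your write-up is missing.) Until concavity is supplied, your lower bound does not stand.
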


We also develop an upper bound on the exact optimal value function, using the obtained policy from Algorithm \ref{alg: approximate DCP}. The obtained policy is a finite state controller (see \citet{hansen2013solving} for the definition of finite state controller). Let $\mathcal{N}$ be the set of nodes in the controller such that we associate a node $n_{s,\mu}$ to each $(s,\mu)$ pair. The action chosen in node $n_{s,\mu}$ is determined by the policy $\hat{\pi}^{*}(a|s,\mu)$. For a given parameter $\theta$, the transition probability to the next node is $P(n_{s',\mu'}|n_{s,\mu},a)=w(\mu',\mu^{sas'}) P(s'|s,a)$. The value function of the finite state controller can be computed by
\begin{align*}
    \hat{V}^{\hat{\pi}^{*}}(n_{s,\mu}) & = \min_{\phi} \sum_{\theta \in \Theta} \mu(\theta)\Big[\Psi\big(c(s,a,\theta) + \gamma \sum_{n_{s',\mu'} \in \mathcal{N}} \\
    & ~~~~~w(\mu',\mu^{sas'}) P(s'|s,a,\theta) \hat{V}^{\hat{\pi}^{*}}(n_{s',\mu'}), \phi\big)\Big].
\end{align*}

Similar to \citet{ahmadi2021constrained}, the value function can be solved efficiently by DCP. It is also known from \citet{hansen2013solving} that the value function obtained by the finite state controller $\hat{V}^{\hat{\pi}^{*}}$ serves as an upper bound for the optimal value function. 

Note that the inequality $\hat{V}^{*} \leq V^{*} \leq \hat{V}^{\hat{\pi}^{*}}$ provides information about how well the optimal value function $V^{*}$ is approximated. As the posterior set $\hat{\mathcal{M}}$ contains more beliefs to accurately evaluate the policies, the gap between the approximate value function and the optimal value function gets smaller. 

\begin{algorithm}[ht]
\caption{New Posterior Set Generation}
\label{alg: belief generation}
\begin{algorithmic}
\STATE {\bfseries input:} policy $\hat{\pi}^{*}$, posterior set $\hat{\mathcal{M}}$, maximum number of newly added posterior distributions $n$
\STATE {\bfseries output:} newly added posterior set $\hat{\mathcal{M}}'$
\STATE initialization: $\hat{\mathcal{M}}' \xleftarrow{} \emptyset$.
\FOR{each $(s,\mu) \in (\mathcal{S} \times \hat{\mathcal{M}})$ and $s' \in \mathcal{S}$}
\STATE $\mu'(\theta) \propto \mu(\theta) f(\xi|\theta)$, where $s'=g(s,\hat{\pi}^{*}(a|s,\mu),\xi)$; 
\STATE $\operatorname{dist}_{\mu'} \xleftarrow{}$ distance of $\mu'$ to $\hat{\mathcal{M}} \bigcup \hat{\mathcal{M}}'$.
\IF{$\operatorname{dist}_{\mu'} > 0$ (i.e., $\mu'$ not in $\hat{\mathcal{M}} \bigcup \hat{\mathcal{M}}'$)}
\STATE $\hat{\mathcal{M}}' \xleftarrow{} \hat{\mathcal{M}}' \bigcup \{\mu'\}$.
\ENDIF
\IF{$|\hat{\mathcal{M}}'|>n$ (to reduce the size of $\hat{\mathcal{M}}'$)}
\FOR{each $\mu' \in \hat{\mathcal{M}}'$}
\STATE $\operatorname{dist}_{\mu'} \xleftarrow{}$ distance of $\mu'$ to $\hat{\mathcal{M}} \bigcup \hat{\mathcal{M}}' \backslash \{\mu'\}$; 
\STATE $\hat{\mathcal{M}}' \xleftarrow{} \hat{\mathcal{M}}' \backslash \{\argmin_{\mu' \in \hat{\mathcal{M}}'} \operatorname{dist}_{\mu'}\}$.
\ENDFOR
\ENDIF
\ENDFOR
\end{algorithmic}
\end{algorithm}

Next we incrementally add new posterior distributions to the posterior set $\hat{\mathcal{M}}$. Different methods can be employed to produce new posterior distributions that are added to the set $\hat{\mathcal{M}}$ at each iteration. We take a similar approach as \citet{poupart2015approximate}, which is based on envelope techniques. It considers the posterior distributions that can be reached in one step from any posterior distribution in $\hat{\mathcal{M}}$ by executing the policy $\hat{\pi}^{*}$. As the number of posterior distributions to be added might be excessive, we can prioritize them by including the $n$ reachable posterior distributions with the largest Euclidean distance to the posterior distributions in $\hat{\mathcal{M}}$. Note that the point-based value iteration approach in \citet{pineau2003point} shares the similar idea, that is, to include new posterior distribution that improves the worst-case density as rapidly as possible, where density is defined as the maximum distance from any posterior distribution to $\hat{\mathcal{M}}$. We summarize the new posterior set generation in Algorithm~\ref{alg: belief generation}.

Combining Algorithm \ref{alg: approximate DCP} and Algorithm \ref{alg: belief generation}, we now present the full algorithm below (ABDCP), which iteratively adds to the new posterior set and solves a bilevel difference convex program at each iteration. Theorem~\ref{thm: convergence} shows that Algorithm \ref{alg: full algorithm} converges to a near-optimal policy.

\begin{algorithm}[ht]
\caption{ABDCP for infinite-horizon BR-MDPs}
\label{alg: full algorithm}
\begin{algorithmic}
\STATE {\bfseries input:} threshold $\epsilon$, number of newly added posterior distributions $n$, initial state $s_1$, dataset $\mathcal{D}$
\STATE {\bfseries output:} policy $\hat{\pi}^{*}$
\STATE initialization: compute prior distribution $\mu_1$ using dataset $\mathcal{D}$; $\hat{M} \xleftarrow{} \{\text{degenerate beliefs}\} \bigcup \{\mu_1\}$.
\REPEAT
\STATE obtain $(\hat{\pi}^{*}, \hat{V}^{*})$ by running Algorithm \ref{alg: approximate DCP}; 
\STATE evaluate policy $\hat{\pi}^{*}$ by solving a DCP and obtain $\hat{V}^{\hat{\pi}^{*}}$;
\STATE $\hat{\mathcal{M}} \xleftarrow{} \hat{\mathcal{M}} \bigcup \hat{\mathcal{M}}'$ generated by Algorithm \ref{alg: belief generation}.
\UNTIL{$||\hat{V}^{\hat{\pi}^{*}} - \hat{V}^{*}||_{\infty} \leq \epsilon$}
\end{algorithmic}
\end{algorithm}

\begin{theorem}\label{thm: convergence}
Algorithm~\ref{alg: full algorithm} converges to a near-optimal policy $\hat{\pi}^{*}$, i.e., $||\hat{V}^{\hat{\pi}^{*}}- V^{*}||_{\infty} \leq \epsilon$, where $\epsilon$ is the desired threshold. 
\end{theorem}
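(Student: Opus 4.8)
The plan is to separate the statement into a \emph{correctness} part (whenever the loop exits, the returned policy is $\epsilon$-optimal) and a \emph{termination} part (the loop exits after finitely many passes). Correctness is a short consequence of the two-sided bound already assembled in the excerpt, whereas termination is where the substantive argument lies.

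For correctness, I would first record the sandwich $\hat{V}^{*} \le V^{*} \le \hat{V}^{\hat{\pi}^{*}}$, in which the lower inequality is Theorem~\ref{thm: lower bound} and the upper inequality is the finite-state-controller bound of \cite{hansen2013solving}. Because the policy $\hat{\pi}^{*}$ output by Algorithm~\ref{alg: full algorithm} is precisely the controller whose value is $\hat{V}^{\hat{\pi}^{*}}$, its true value coincides with (or is dominated by) the controller value, so $V^{\hat{\pi}^{*}} \le \hat{V}^{\hat{\pi}^{*}}$. Then, on the iteration where the loop exits,
\begin{align*}
V^{\hat{\pi}^{*}}(s_1,\mu_1) - V^{*}(s_1,\mu_1) &\le \hat{V}^{\hat{\pi}^{*}}(s_1,\mu_1) - \hat{V}^{*}(s_1,\mu_1) \le \epsilon,
\end{align*}
where the first inequality uses $V^{*} \ge \hat{V}^{*}$ and the second is exactly the stopping test. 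This disposes of correctness.

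For termination, the goal is to show the monitored gap $g_{\hat{\mathcal{M}}} := \hat{V}^{\hat{\pi}^{*}}(s_1,\mu_1) - \hat{V}^{*}(s_1,\mu_1)$ can be forced below $\epsilon$ after finitely many enlargements of $\hat{\mathcal{M}}$. Since $\Theta$ is finite, every belief lives in the compact simplex $\Delta(\Theta)$, and I would introduce the \emph{density} $\delta(\hat{\mathcal{M}}) := \sup_{\mu \in \mathcal{R}} \min_{\mu_i \in \hat{\mathcal{M}}} \lVert \mu - \mu_i \rVert$, where $\mathcal{R}$ is the set of posteriors reachable from $\mu_1$ under $\hat{\pi}^{*}$. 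The argument then has two ingredients. First, a point-based-value-iteration-style error bound (in the spirit of \cite{pineau2003point, poupart2015approximate}): because the interpolation LP~\eqref{Weight LP} reproduces any belief exactly once it enters $\hat{\mathcal{M}}$, and because the one-stage map is a $\gamma$-contraction (Lemma~\ref{Contraction mapping}), one can show $g_{\hat{\mathcal{M}}} \le K\,\delta(\hat{\mathcal{M}})/(1-\gamma)^2$ for a constant $K$ controlled by the cost bounds and the modulus of continuity of $\rho_{\mu}\mathbb{E}_{\theta}[\cdot]$ in $\mu$. Second, a covering argument: by discounting, beliefs reachable after $k$ steps affect the value by at most $O(\gamma^k)$, so fixing $k$ with $\gamma^k$ small reduces the task to the \emph{finite} set of beliefs reachable within $k$ steps; since Algorithm~\ref{alg: belief generation} repeatedly inserts the farthest not-yet-covered reachable posteriors, after finitely many iterations all of these lie in $\hat{\mathcal{M}}$, at which point $\delta(\hat{\mathcal{M}})$, and hence $g_{\hat{\mathcal{M}}}$, drops below $\epsilon$.

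The main obstacle I anticipate is the first ingredient: propagating the interpolation error through the nested convex risk measure and the bilevel structure to obtain a bound of the form $g_{\hat{\mathcal{M}}} \le K\,\delta(\hat{\mathcal{M}})/(1-\gamma)^2$. Unlike the risk-neutral POMDP case, where the value is piecewise linear in the belief and the error analysis is standard, here I must verify that $\mu \mapsto \rho_{\mu}\mathbb{E}_{\theta}[\,\cdot\,]$ together with the inner minimization over $\phi$ in the parametric representation $\rho_{\mu}(Z)=\inf_{\phi}\mathbb{E}_{\mu}[\Psi(Z,\phi)]$ is Lipschitz in $\mu$ with a uniform constant, and that this property is preserved under the Bellman recursion so that the per-stage interpolation errors accumulate into the stated geometric bound. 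Establishing this modulus of continuity for the considered class of convex risk measures, and confirming that the greedy farthest-point insertion actually drives the density of the reachable set to zero, are the two places where care is required; the remaining steps are routine given the contraction and monotonicity lemmas.
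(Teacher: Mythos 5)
Your decomposition into correctness-at-termination plus finite termination is exactly the paper's, and your correctness half coincides with the paper's own argument: both rest on the sandwich $\hat{V}^{*} \le V^{*} \le \hat{V}^{\hat{\pi}^{*}}$ (Theorem~\ref{thm: lower bound} plus the finite-state-controller bound of \cite{hansen2013solving}) combined with the stopping test, giving $V^{\hat{\pi}^{*}}(s_1,\mu_1) - V^{*}(s_1,\mu_1) \le \hat{V}^{\hat{\pi}^{*}}(s_1,\mu_1) - \hat{V}^{*}(s_1,\mu_1) \le \epsilon$ at exit. Where you genuinely diverge is termination. The paper argues qualitatively, by contradiction: if the loop never exits, then Algorithm~\ref{alg: belief generation} keeps enlarging the monotonically growing set $\hat{\mathcal{M}}$ with reachable posteriors, which must eventually contain enough points to evaluate accurately the policies produced infinitely often, contradicting non-termination. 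You instead propose a quantitative route: introduce the density $\delta(\hat{\mathcal{M}})$ of the reachable belief set, prove a point-based-value-iteration-style bound of the form $\hat{V}^{\hat{\pi}^{*}}(s_1,\mu_1)-\hat{V}^{*}(s_1,\mu_1) \le K\,\delta(\hat{\mathcal{M}})/(1-\gamma)^2$, and use discounting plus finiteness of the $k$-step reachable belief set to show that farthest-point insertion forces the density, hence the monitored gap, below $\epsilon$. If completed, your route buys strictly more than the paper's: an explicit error bound and, in principle, a convergence rate, whereas the paper's argument yields only eventual termination.

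That said, the two ingredients you flag as ``requiring care'' are precisely where the substance lies, and neither is established in your plan --- nor, to be fair, in the paper's proof, which simply asserts that enough reachable posteriors eventually yield accurate evaluation. The Lipschitz/error-propagation lemma for $\mu \mapsto \inf_{\phi}\mathbb{E}_{\mu}[\Psi(\cdot,\phi)]$ is plausible for finite $\Theta$ (the objective is linear in $\mu$ for fixed $\phi$, and $\phi$ can typically be confined to a compact set, e.g.\ for CVaR), but your bound must also account for the fact that $\hat{V}^{*}$ solves the ADCP \eqref{ADCP} with interpolated constraints rather than a fixed-point equation, so the error enters both sides of the monitored gap. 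Your covering argument also needs two repairs that the paper likewise glosses over: the reachable set $\mathcal{R}$ depends on $\hat{\pi}^{*}$, which changes at every iteration (the fix is to work with the finite set of posteriors reachable within $k$ steps under \emph{any} policy), and Algorithm~\ref{alg: belief generation} prunes to at most $n$ new points per call, so one must argue that monotone growth of $\hat{\mathcal{M}}$ still exhausts that finite set rather than cycling among policies. In short: same skeleton and identical correctness step, but a more ambitious termination argument whose key lemma remains unproven --- and that lemma is exactly the point the paper's own qualitative contradiction argument never rigorously closes either.
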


As the number of iterations in Algorithm~\ref{alg: full algorithm} increases, the gap between the optimal value function and the lower bound becomes arbitrarily small, which shows the lower bound in Theorem~\ref{thm: lower bound} is non-trivial.

\section{Numerical Experiments}\label{sec: experiment}
We illustrate the performance of the infinite-horizon BR-MDP formulation with different choices of risk measures and the proposed approximate bilevel DCP algorithm with an offline path planning problem.

We adapt two methods to our offline planning problems and compare their performances. The first method (CALP) comes from \citet{poupart2015approximate} with a risk-neutral POMDP formulation. The second method (DR-MDP) comes from \citet{xu2010distributionally} with a distributionally robust MDP formulation. Note that the BPO approach from \citet{lee2018bayesian} solves a risk-neutral BA-MDP formulation, where two separate encoders for the physical state and belief state are designed to deal with the continuous latent parameter space. It could have been a good benchmark if its encoder design were made available. Apart from the two benchmarks, we also compare with the nominal approach (MLE), where a maximal likelihood estimator for the parameter is computed from the given dataset and then a policy is obtained by solving the MDP with the plugged-in parameter value. In our proposed algorithm (ABDCP) for the infinite-horizon BR-MDP formulation, we consider two particular risk measures, namely expectation and CVaR with different risk levels $\alpha$. It should be noted that, when the considered risk measure is expectation, our algorithm can be modified and reduced to CALP. Similar observation is verified in \citet{poupart2006analytic}, where the BA-MDP formulation is transformed into a POMDP formulation. 

\begin{table*}[ht]
\centering
\resizebox{14.2cm}{!}{%
\begin{tabular}{ccccc}
\hline
Approach                   & time (sec)    & expected cost        & CVaR ($\alpha=0.95$) cost & CVaR ($\alpha=0.8$) cost \\ \hline
ABDCP-EXP (CALP)           & 969.13(0.18)  & 70.06(0.51)          & 85.72                     & 82.06                    \\ \hline
ABDCP-CVaR ($\alpha=0.95$) & 2639.38(0.22) & 67.51(0.24)          & \textbf{75.67}                     & \textbf{73.72}          \\ \hline
ABDCP-CVaR ($\alpha=0.8$)  & 2545.74(0.24) & \textbf{66.02(0.38)}           & 79.97                     & 75.50                    \\ \hline
DR-MDP                     & 62.34(0.11)    & 79.43(0.15)          & 81.64                     & 80.60                    \\ \hline
Nominal                    & 61.44(0.08)    & 82.59(0.59)          & 94.10                     & 92.46                    \\ \hline
\end{tabular}
}
\caption{Results for path planning problem. Running time for each replication, expected cost, and CVaR cost at different risk levels $\alpha$ are reported for different algorithms. Standard errors are reported in parentheses. Number of data points is set to $N=10$. }
\label{table: path planning small}
\end{table*}

\begin{table*}[ht]
\centering
\resizebox{14.2cm}{!}{%
\begin{tabular}{ccccc}
\hline
Approach                   & time (sec)    & expected cost        & CVaR ($\alpha=0.95$) cost & CVaR ($\alpha=0.8$) cost \\ \hline
ABDCP-EXP (CALP)           & 967.25(0.17)  & 64.15(0.05)          & 66.34            & 65.97 \\ \hline
ABDCP-CVaR ($\alpha=0.95$) & 2642.26(0.21) & 65.18(0.03)          & 66.14            & 65.76 \\ \hline
ABDCP-CVaR ($\alpha=0.8$)  & 2643.48(0.25) & 65.17(0.04)          & 66.26            & 65.84 \\ \hline
DR-MDP                     & 63.15(0.09)    & 65.22(0.03)          & 66.43            & 66.01 \\ \hline
Nominal                    & 62.47(0.08)    & 64.31(0.12)          & 67.55            & 65.59 \\ \hline
\end{tabular}
}
\caption{Results for path planning problem. Running time for each replication, expected cost, and CVaR cost at different risk levels $\alpha$ are reported for different algorithms. Standard errors are reported in parentheses. Number of data points is set to $N=1000$. }
\label{table: path planning large}
\end{table*}

\begin{figure*}[!htp]
    \centering 
    \subfigure[ABDCP-EXP]{\label{figure: path planning expectation}\includegraphics[width=42mm]{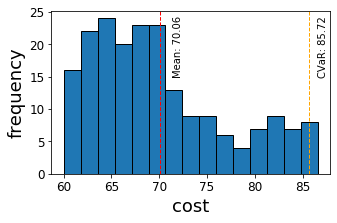}}
    \subfigure[ABDCP-CVaR($\alpha=0.95$)]{\label{figure: path planning CVaR 1}\includegraphics[width=42mm]{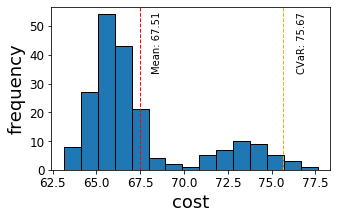}}
    \subfigure[ABDCP-CVaR($\alpha=0.8$)]{\label{figure: path planning CVaR 2}\includegraphics[width=42mm]{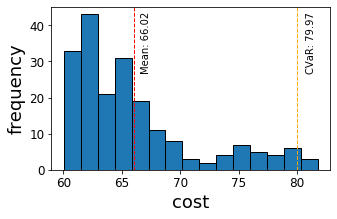}}   
    \subfigure[Nominal]{\label{figure: path planning MLE}\includegraphics[width=42mm]{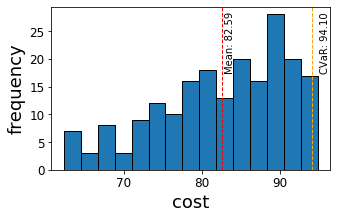}}   
    \captionsetup{justification=centering}
    \caption{Histogram of the actual performance over 200 replications for different algorithms. Number of data points $N=10$.}
    \label{figure: path planning histogram}
\end{figure*}  

For each of the considered algorithms, we obtain the corresponding optimal policy with the same dataset. It should be noted that the calculations are carried out offline. The obtained policy is then applied for risk-averse path planning and evaluated on the true system, i.e., MDP with the true parameter. This is referred to as one replication, and we repeat the experiments for 200 replications on different independent datasets. Results for the path planning problem can be found in Table \ref{table: path planning small} and Table \ref{table: path planning large}, with different data size $N=10$ and $N=1000$. The columns report the running time, expected performance (cost), and the CVaR performance (cost) of our proposed algorithm and benchmarks over the 200 replications. ABDCP-EXP stands for our proposed algorithm ABDCP with expectation as the risk measure. ABDCP-CVaR stands for our proposed algorithm ABDCP with CVaR as the risk measure. We also show the histogram of the actual performance over 200 replications for our proposed algorithm and the nominal benchmark on the path planning problem in Figure \ref{figure: path planning histogram}. We summarize the main observations for the path planning problem below.

\textbf{BR-MDP hedges against epistemic uncertainty}: in each replication, data points are randomly sampled from the true distribution. While facing the epistemic uncertainty, BR-MDP formulation optimizes over a dynamic risk measure that provides robustness. Table \ref{table: path planning small} shows that our proposed ABDCP algorithm is the most robust in the sense of balancing the mean and variability of the actual cost. The CVaR cost of our proposed algorithm is also lower than the other benchmarks, showing that it avoids large costs. In contrast, the nominal approach performs badly when the data size is small, e.g. $N=10$, indicating that it is not robust against the epistemic uncertainty and suffers from the scarcity of data. On the other hand, DR-MDP is overly conservative, even though it has the smallest variability. This conservativeness comes from two aspects. First, it always chooses to optimize over the worst-case scenario, which rarely happens in the true system. Second, the static worst-case risk measure prevents it from adapting to the data realizations, which is one of the motivations for the dynamic risk measure considered in the BR-MDP formulation. In contrast, BR-MDP formulation learns from the future data realization and updates its posterior distribution on $\theta$. 

\textbf{Larger data size reduces epistemic uncertainty}: when there are more data, the posterior distribution used in BR-MDP formulation and the MLE estimator used in the nominal approach converge to the true parameter, which reduces to solving an MDP with known transition probability and cost function. Therefore, the optimal policies and the actual costs tend to be the same. 

\textbf{Convergence of ABDCP}: the running time for a single replication on the path planning problem using our proposed ABDCP algorithm is affordable, and the proposed algorithm solves the infinite-horizon BR-MDP in finite time. In contrast, the infinite-horizon BR-MDP is intractable with standard value iteration or policy iteration.

\textbf{Effect of risk measures}: although both risk measures (expectation and CVaR) result in time-consistent optimal policy for each considered formulation, they provide different levels of robustness. Even though the expectation case is faster to compute, it provides the least robustness, especially when the data size is small. For the CVaR risk measure, different risk level $\alpha$ also affects the robustness. As $\alpha$ increases, the agent is more risk-averse, and the CVaR cost is smaller since it avoids more severe costs, as is shown in Figure \ref{figure: path planning CVaR 1} and Figure \ref{figure: path planning CVaR 2}. But this comes with a price: its expected cost is higher. This is intuitive: even though the agent avoids severe costs, it also forfeits a chance to traverse a path that is likely to have less traffic, even though the likelihood is small. This is shown as a right-shift of the actual performance distribution from Figure \ref{figure: path planning CVaR 2} to Figure \ref{figure: path planning CVaR 1}. 

\section{Conclusion}\label{sec: conclusions}
In this paper, we consider the offline planning problem in MDPs with epistemic uncertainty, where we only have access to a prior belief distribution over MDPs that is constructed by the offline data. We consider the infinite-horizon BR-MDP that produces a time-consistent formulation and provides the robustness against epistemic uncertainty. We develop an efficient optimization-based approximation algorithm that converges to the optimal policy. Our experimental results demonstrate the efficiency of the proposed approximate algorithm, and show the robustness of the infinite-horizon BR-MDP formulation. One of the future directions is to conduct the iteration complexity analysis on the proposed algorithm. Another interesting direction is to utilize function approximation to improve the scalability of the proposed approach to more complex domains. Separate encoders for the physical state and belief state have been proposed in \citet{lee2018bayesian} and adaptation from their risk-neutral BA-MDP formulation to our risk averse BR-MDP formulation could be interesting.

\section{Acknowledgments}
The authors gratefully acknowledge the support by the Air Force Office of Scientific Research under Grant FA9550-22-1-0244, the National Science Foundation under Grant NSF-ECCS-2419562 and the NSF AI Institute for Advances in Optimization under Grant NSF-2112533.

\bibliography{aaai25}

\clearpage
\appendix

\onecolumn

\section{Bayes Optimality and Offline Reinforcement Learning}
In Section~\ref{sec: Bellman} we define the optimal value function $V^{*}(s,\mu)=\min_{\pi \in \Pi^{MD}} V^{\pi}(s,\mu)$. Note that it is a function on the augmented state. This is consistent with the Bayesian reinforcement learning literature (e.g. \citet{duff2002optimal, rigter2021risk}), where the Bayes optimality is with respect to the prior knowledge of the system. In this paper, we consider the offline planning problem, where we only have access to the prior belief distribution over the MDP that is constructed with the offline data. Ideally, when the number of data points $N$ goes to infinity, one would expect the prior distribution already converges to a Dirac delta function concentrated on the true parameter $\theta^c$ almost surely (a.s.), and the optimal value function of BR-MDP, denoted by $V_N^{*}(s)$, converges to the optimal value function of the true MDP (denoted by $V^{*}(s)$). Note that $V_N^{*}(s)$ is equivalent to $V^{*}(s,\mu_N)$, where $\mu_N$ is the posterior distribution of $\theta$ after observing $N$ data points. 
\begin{assumption}\label{assumption:appendix}
(1) the cost function $C(\cdot,\cdot,\cdot)$ is bounded; (2) for every $\xi \in \Xi$, the distribution $f(\xi|\cdot)$ is continuous at $\theta^c$, and there is a measurable function $h: \Xi \to \mathbb{R}$ such that $\int_{\Xi} h(\xi) d\xi$ is finite and $f(\xi|\theta) \leq h(\xi)$ for all $\xi$ and all $\theta$ in a neighborhood of $\theta^c$; (3) For almost every sequence of data points of size $N$ and any $\epsilon>0$, $\lim_{N \to \infty} \int_{\theta: ||\theta-\theta^c||\leq \epsilon} \mu_N(\theta)d\theta=1$.
\end{assumption}

It should be noted that Assumption~\ref{assumption:appendix}(3) means that the Bayesian posterior of $\theta$ becomes more concentrated on $\theta^c$ as the data size increases and eventually degenerates to the Dirac delta function on $\theta^c$. It is a classical result under mild regularity conditions (see \citet{doob1949application}). 

\begin{theorem}
Under Assumption~\ref{assumption:appendix}, for a given dataset with size $N$, the solution $V_N^{*}$ to the infinite horizon BR-MDP converges uniformly to the solution $V^{*}$ to the true MDP (MDP with true parameter $\theta^c$), i.e., $||V_N^{*}-V^{*}||_{\infty} \to 0$, a.s., as $N \to \infty$. 
\end{theorem}

\begin{proof}
Following Definition~\ref{Bellman operator} in Section~\ref{sec: Bellman}, we define the following Bellman operator:
\begin{align*}
    \mathcal{T}^c V(s) &= \min_{a \in \mathcal{A}} \rho_{\delta(\theta^c)} \mathbb{E}_{\theta}[C(s,a,\xi)+\gamma V(s')]\\
    \mathcal{T}_N V(s) &= \min_{a \in \mathcal{A}} \rho_{\mu_N} \mathbb{E}_{\theta}[C(s,a,\xi)+\gamma V(s')].
\end{align*}
Note that as $N \to \infty$, the posterior distribution $\mu_N$ converges to a Dirac delta function concentrated on $\theta^c$, there is no need to update the posterior distribution as is in the BR-MDP formulation. By Lemma~\ref{Contraction mapping}, $\mathcal{T}^c$ and $\mathcal{T}_N$ are $\gamma$ contraction mapping for $||\cdot||_{\infty}$. Therefore, we have
\begin{align*}
    ||V_N^{*}-V^{*}||_{\infty} & = ||\mathcal{T}_N (V_N^{*}) -\mathcal{T}(V^{*})||_{\infty} \\
    & \leq ||\mathcal{T}_N (V^{*}_N) - \mathcal{T}_N (V^{*})||_{\infty} + ||\mathcal{T}_N (V^{*}) - \mathcal{T} (V^{*})||_{\infty} \\
    & \leq \gamma ||V_N^{*}-V^{*}|| + ||\mathcal{T}_N(V^{*}) - \mathcal{T}(V^{*})||_{\infty}.
\end{align*}
This gives
\begin{align*}
    ||V_N^{*}-V^{*}||_{\infty} \leq \frac{1}{1-\gamma} ||\mathcal{T}_N(V^{*}) - \mathcal{T}(V^{*})||_{\infty}. 
\end{align*}
Therefore, we need to show $\min_{a \in \mathcal{A}} \rho_{\mu_N} \mathbb{E}_{\theta}[C(s,a,\xi)+\gamma V^{*}(s')]$ converges uniformly to $\min_{a \in \mathcal{A}} \rho_{\delta(\theta^c)} \mathbb{E}_{\theta}[C(s,a,\xi)+\gamma V^{*}(s')]$.
For ease of notation, let $Z_{(s,a)}(\xi)=C(s,a,\xi)+\gamma V^{*}(s')$, and $Z_{(s,a)}:=Z_{(s,a)}(\theta)=\mathbb{E}_{\theta}[Z_{(s,a)}(\xi)]$. In turn, it suffices to show $\rho_{\mu_N}(Z_{(s,a)})$ converges to $\rho_{\delta(\theta^c)}(Z_{(s,a)})$ uniformly in $(s,a) \in \mathcal{S} \times \mathcal{A}$. Since $C(\cdot,\cdot,\cdot)$ and the value function are bounded, there is $C>0$ such that $|Z_{(s,a)}| \leq C$ for all $(s,a,\theta) \in \mathcal{S} \times \mathcal{A} \times \Theta$. For the risk measure considered in Section 2.3, $\rho_{\mu}(Z)=\inf_{\phi} \mathbb{E}_{\mu}[\Psi(Z,\phi)]$, we have that
\begin{align}\label{eq:difference}
    |\rho_{\mu_N}(Z_{(s,a)}) - \rho_{\delta(\theta^c)}(Z_{(s,a)})| = |\inf_{\phi} \int_{\Theta} \Psi(\theta, \phi) \mu_{N}(\theta)d\theta - \inf_{\phi} \int_{\Theta} \Psi(\theta, \phi) \delta(\theta^c)d\theta|.
\end{align}
Applying Theorem 5.3 in \cite{shapiro2021lectures}, as $N \to \infty$, $\mu_N(\theta) \to \delta(\theta^c)$ a.s., we have $\int_{\Theta} \Psi(\theta, \phi) \mu_{N}(\theta)d\theta \to \int_{\Theta} \Psi(\theta, \phi) \delta(\theta^c)d\theta$ a.s., and $\inf_{\phi} \int_{\Theta} \Psi(\theta, \phi) \mu_{N}(\theta)d\theta \to \inf_{\phi} \int_{\Theta} \Psi(\theta, \phi) \delta(\theta^c)d\theta$ a.s., which implies \eqref{eq:difference} $\to$ 0 a.s., thus the proof is complete.
\end{proof}

\section{ADDITIONAL LITERATURE REVIEW ON ROBUST MDPS}
There have been numerous approaches that address epistemic uncertainty in MDPs, with robust MDP being one of the most widely used methods. The robust MDP is typically formulated as a max-min problem. Here, the goal is to identify the policy that maximizes the value function while considering the most unfavorable model within an uncertainty set centered around a nominal model. There has been a growing commitment to improving the robust MDP framework over the recent years. 

For example, robust MDP framework usually assumes an $(s,a)$-rectangularity on the ambiguity set, which is too restrictive and may not hold in many applications. \citet{ho2022robust} study the robust MDP with a $\phi$-divergence ambiguity set under the $s$-rectangularity, which restricts the conservatism among transition probabilities corresponding to different actions in the same state and leads to a superior performance. \citet{goyal2023robust} adopt a factor model to represent probability transitions, wherein the transition probability is expressed as a linear function of a factor matrix that is uncertain and belongs to a factor matrix uncertainty set. This is more general and less restrictive than the $(s,a)$-rectangularity in most of the prior work, enabling the decision maker to capture interdependencies among probability transitions across different states. 

Also note that most of the prior work in robust MDP require the knowledge of the nominal model. \citet{panaganti2022sample} introduce a model-based RL algorithm aimed at acquiring an $\epsilon$-optimal robust policy in cases where the nominal model is unknown.

A final note to the robust MDP framework is that the prior work mostly focus on value iteration or policy iteration for solving the robust formulation. Recently, a set of policy gradient algorithms has emerged for solving the robust MDPs. \citet{wang2022policy} propose a policy gradient method for solving robust MDPs with a particular R-contamination ambiguity set. \citet{li2022first} propose a mirror descent approach for solving robust MDPs under the $(s,a)$-rectangularity. \citet{wang2023convergence} propose a double-loop policy gradient method for robust MDPs. They use two nested loops: an outer loop updates policies, and an inner loop approximately computes the worst-case transition probabilities over an infinite number of transition probabilities in the ambiguity set.

\section{Technical Proof}
\begin{proof}[Proof of Lemma \ref{Monotonicity}]
Note that
\begin{align*}
    (\mathcal{T}^{\pi} V)(s,\mu) = \rho_{\mu_1} \mathbb{E}_{\theta_1}[C(s_1,a_1.\xi_1) + \gamma \rho_{\mu_2} \mathbb{E}_{\theta_2}[C(s_2,a_2,\xi_2) + \cdots + \gamma V(s_k,\mu_k)]]
\end{align*}
for all positive integer $k$. As the terminal value function $V(s_k,\mu_k) \leq V'(s_k,\mu_k)$ for all $s_k \in \mathcal{S}, \mu_k \in \mathcal{M}$, and using the monotonicity of the convex risk measure $\rho(X) \leq \rho(Y)$ if $X(\omega) \leq Y(\omega), \forall \omega \in \Omega$, we have $(\mathcal{T}^{\pi} V)(s,\mu) \leq (\mathcal{T}^{\pi} V')(s,\mu)$. Same analysis works for operator $\mathcal{T}$. 
\end{proof}

\begin{proof}[Proof of Lemma \ref{Contraction mapping}]
Let $C_{\max} = \max_{s \in \mathcal{S}, \mu \in \mathcal{M}} |V(s,\mu)-V'(s,\mu)|$, we have 
\begin{align}\label{equation_1}
    V(s,\mu) - C_{\max} \leq V'(s,\mu) \leq V(s,\mu) + C_{\max}
\end{align}
Applying $\mathcal{T}^{\pi}$ on inequality \eqref{equation_1} and using the monotonicity shown in Lemma~\ref{Monotonicity}, we have
\begin{align*}
    (\mathcal{T}^{\pi} V)(s,\mu) - \gamma C_{\max} \leq (\mathcal{T}^{\pi} V')(s,\mu) \leq (\mathcal{T}^{\pi} V)(s,\mu) + \gamma C_{\max},
\end{align*}
which is justified by the translation invariance of the convex risk measure. Then we have
\begin{align*}
    \max_{s \in \mathcal{S}, \mu \in \mathcal{M}} |(\mathcal{T}^{\pi} V)(s,\mu) - (\mathcal{T}^{\pi} V')(s,\mu)| \leq \gamma \max_{s \in \mathcal{S}, \mu \in \mathcal{M}} |V(s,\mu)-V'(s,\mu)|,
\end{align*}
i.e., $||\mathcal{T}^{\pi} V - \mathcal{T}^{\pi} V'||_{\infty} \leq \gamma ||V-V'||_{\infty}$. Same analysis works for operator $\mathcal{T}$. 
\end{proof}

\begin{proof}[Proof of Proposition \ref{LP pre-condition}]
(i) Let $\pi$ be the policy for which 
\begin{align}\label{equation_2}
    V \geq \mathcal{T}^{\pi} V.
\end{align}
Note that such policy exists as one can choose $\pi$ that yields low current cost. Applying operator $\mathcal{T}^{\pi} V$ to both sides of inequality \eqref{equation_2} and using Lemma \ref{Monotonicity}, we have $V \geq (\mathcal{T}^{\pi})^{t} V$, $t=1,2,\cdots$. Note that the right hand side of the above inequality represents the cost of a finite horizon problem with stationary policy $\pi$ and with final value function $V$. Also note that 
\begin{align*}
    (\mathcal{T}^{\pi})^{t} V & = \rho_{\mu_1} \mathbb{E}_{\theta_1}[C(s_1,a_1,\xi_1) + \cdots + \gamma V(s_{t+1},\mu_{t+1})] \\
    & \geq \rho_{\mu_1} \mathbb{E}_{\theta_1}[C(s_1,a_1,\xi_1) + \cdots + \gamma \rho_{\mu_t}\mathbb{E}_{\theta_t}[C(s_t,a_t,\xi_t)]].
\end{align*}
Let $t \to \infty$, we get $V \geq V^{\pi}$, where $V^{\pi}$ is the value function under policy $\pi$. Since $V^{*} = \min_{\pi} V^{\pi}$, we have $V \geq V^{*}$.

(ii) Consider an arbitrary policy $\pi$ and a finite horizon problem with terminal cost $V(s_{t+1},\mu_{t+1})$. We have under the policy $\pi$,
\begin{align*}
    & \rho_{\mu_1} \mathbb{E}_{\theta_1} [C(s_1,a_1,\xi_1) + \cdots + \gamma V(s_{t+1},\mu_{t+1})] \\
    = & \rho_{\mu_1} \mathbb{E}_{\theta_1} [C(s_1,a_1,\xi_1) + \cdots + \gamma \rho_{\mu_t} \mathbb{E}_{\theta_t}[C(s_t,a_t,\xi_t) + \gamma V(s_{t+1},\mu_{t+1})]].
\end{align*}
Note that $\rho_{\mu_t} \mathbb{E}_{\theta_t} [C(s_t,a_t,\xi_t) + \gamma V(s_{t+1}, \mu_{t+1})] \geq \mathcal{T} V (s_t,\mu_t) \geq V(s_t,\mu_t)$. Therefore, we have 
\begin{align*}
    & \rho_{\mu_1} \mathbb{E}_{\theta_1} [C(s_1,a_1,\xi_1) + \cdots + \gamma V(s_{t+1}, \mu_{t+1})] \\
    \geq & \rho_{\mu_1} \mathbb{E}_{\theta_1} [C(s_1,a_1,\xi_1) + \cdots + \rho_{\mu_{t-1}} \mathbb{E}_{\theta_{t-1}}[C(s_{t-1},a_{t-1},\xi_{t-1}) + \gamma V(s_t,\mu_t)]].
\end{align*}
Continuing, we have
\begin{align*}
    \rho_{\mu_1} \mathbb{E}_{\theta_1} [C(s_1,a_1,\xi_1) + \cdots + \gamma V(s_{t+1},\mu_{t+1})] \geq V(s,\mu).
\end{align*}
Let $C_{\max}$ be an upper bound on $|V(s,\mu)|, \forall s \in \mathcal{S}, \mu \in \mathcal{M}$, we have 
\begin{align*}
    \rho_{\mu_1} \mathbb{E}_{\theta_1} [C(s_1,a_1,\xi_1) + \cdots + \gamma \rho_{\mu_t} \mathbb{E}_{\theta_t}[C(s_t,a_t,\xi_t)]] \geq V(s,\mu) - C_{\max} \gamma^{t}.
\end{align*}
Passing to the limit $t \to \infty$, we have for any policy $\pi$, $V^{\pi}(s,\mu) \geq V(s,\mu)$. Therefore, the infimum over all policy $\pi$ is bounded from below by $V(s,\mu)$, i.e., $V^{*} \geq V$.
\end{proof}

\begin{proof}[Proof of Theorem \ref{thm: lower bound}]
We first show $V^{\pi}(s,\mu)$ is concave in $\mu$ for any policy $\pi$. Note that 
\begin{align*}
    V^{\pi}(s,\mu) = \min_{\phi} \mathbb{E}_{\mu}[\Psi(\mathbb{E}_{\theta}[C(s,a,\xi) + \gamma V^{\pi}(s',\mu')], \phi)].
\end{align*}
For $0 \leq t \leq 1$ and any $\mu_1, \mu_2 \in \mathcal{M}$,
\begin{align*}
    & V^{\pi}(s, t \mu_1 + (1-t) \mu_2) \\
    = & \min_{\phi} \mathbb{E}_{t \mu_1 + (1-t) \mu_2} [\Psi(\mathbb{E}_{\theta} [C(s,a,\xi)+\gamma V^{\pi}(s',\mu')], \phi)] \\
    = & \min_{\phi} \{ \mathbb{E}_{\mu_1} [\Psi(\mathbb{E}_{\theta} [C(s,a,\xi)+\gamma V^{\pi}(s',\mu')], \phi)] \\
    & \quad + (1-t) \mathbb{E}_{\mu_2} [\Psi(\mathbb{E}_{\theta} [C(s,a,\xi)+\gamma V^{\pi}(s',\mu')], \phi)]\} \\
    \geq & t \min_{\phi_1}\{\mathbb{E}_{\mu_1} [\Psi(\mathbb{E}_{\theta} [C(s,a,\xi)+\gamma V^{\pi}(s',\mu')], \phi_1)]\} \\
    & \quad + (1-t) \min_{\phi_2}\{\mathbb{E}_{\mu_2} [\Psi(\mathbb{E}_{\theta} [C(s,a,\xi)+\gamma V^{\pi}(s',\mu')], \phi_2)]\} \\
    = & t V^{\pi}(s,\mu_1) + (1-t) V^{\pi}(s,\mu_2).
\end{align*}
The same analysis works for the optimal value function $V^{*}$. Consider running Algorithm~\ref{alg: approximate DCP} with posterior set $\hat{\mathcal{M}}$ and the entire posterior set $\mathcal{M}$. Now applying Jensen's inequality and by Theorem 12 in \citet{hauskrecht2000value}, we have $\hat{V}^{*} \leq V^{*}$. Note that originally in \citet{hauskrecht2000value}, the proof is based on the fact that value function in partially observable Markov decision process is convex in belief and the linear programming formulation has constraint $V(b) \geq R(s,a) + \gamma \sum_{b'}P(b'|b,a)V(b')$, where $R$ is the reward function. Since $V$ is convex and by linear interpolation, applying Jensen's inequality to the right hand side of the constraint leads to $\hat{V}(b)$ greater than $V(b)$. Now we are in an opposite direction, by Jensen's inequality and concavity of $V$, we have $\hat{V}^{*} \leq V^{*}$.
\end{proof}

\begin{proof}[Proof of Theorem \ref{thm: convergence}]
First, we show Algorithm~\ref{alg: belief generation} is essentially a greedy algorithm for geometric set cover problem (see \citet{clarkson2005improved}). Denote by $K$ the cardinality of the parameter space $\Theta$. The set $\mathcal{M}$ of posterior distribution $\mu$ is then a $K-1$ dimensional simplex (probability simplex). We want to find a subset from a set of $\epsilon$-size ball of dimension $K-1$ so as to cover all points in the probability simplex $\mathcal{M}$ in finite time, which is categorized as geometric set cover problem (except that we do not require to find the subset of minimum size). For example, when $K=2$, the problem is reduced to finding a union of intervals with length $\epsilon$ to cover the line segment $\{(x,y)|x+y=1,0\leq x \leq 1, 0 \leq y \leq 1\}$; when $K=3$, the problem is reduced to finding a union of circles of radius $\epsilon$ to cover the plane $\{(x,y,z)|x+y+z=1,0\leq x \leq 1, 0 \leq y \leq 1,0\leq z \leq 1\}$. 

Suppose we are at iteration $k$ in Algorithm~\ref{alg: full algorithm}. Algorithm~\ref{alg: belief generation} adds to the current posterior set $\hat{\mathcal{M}}_k$ reachable posterior distributions that are farthest from the current set $\hat{\mathcal{M}}_k$, which ensures the number of uncovered posterior distributions added to the current set $\hat{\mathcal{M}}_k$ is maximized. For set cover, we evaluate an approximation algorithm by considering the ratio between the
number of subsets used in the cover output by the algorithm and the number of subsets used by the optimal solution. The greedy algorithm gives an approximation ratio that is very close to the best for a polynomial-time algorithm (see \citet{feige1998threshold, clarkson2005improved}).

According to Algorithm~\ref{alg: belief generation}, after a finite time $k$, the current posterior set $\hat{\mathcal{M}}_k$ contains enough posterior distributions such that $\forall s \in \mathcal{S}, a \in \mathcal{A}, s' \in \mathcal{S}, \mu \in \hat{\mathcal{M}}_k$, $\exists \mu' \in \hat{\mathcal{M}}_k$, s.t. $||\mu^{sas'}-\mu'||_2 \leq \epsilon$, $\forall \epsilon' > 0$. Recall that the Bellman operator $\mathcal{T}$ in Definition~\ref{Bellman operator} can be written as:
\begin{align*}
    \mathcal{T} V(s,\mu)=\min_{\phi} \sum_{\theta \in \Theta}\mu(\theta)\Big[\Psi\big(C(s,a,\theta)+\gamma \underbrace{\sum_{s' \in \mathcal{S}, \mu' \in \mathcal{M}}P(s'|s,a,\theta)V(s',\mu')}_{Z}, \phi\big)\Big].
\end{align*}
Denote by $\mathcal{T}_k$ the Bellman operator in the $k$-th iteration of the Algorithm~\ref{alg: full algorithm}, i.e.,
\begin{align*}
    \mathcal{T}_k V(s,\mu)=\min_{\phi} \sum_{\theta \in \Theta}\mu(\theta)\Big[\Psi\big(C(s,a,\theta)+\gamma \underbrace{\sum_{s' \in \mathcal{S}, \mu' \in \hat{\mathcal{M}}_k}P(s'|s,a,\theta)w(\mu',\mu^{sas'})V(s',\mu')}_{Z_k}, \phi\big)\Big].
\end{align*}
Considering the set $\{s'\in \mathcal{S}|\mu^{sas'}\in\hat{\mathcal{M}}_k\}$ and $\{s'\in \mathcal{S}|\mu^{sas'}\notin\hat{\mathcal{M}}_k\}$, we can rewrite $\mathcal{T}_k$ as:
\begin{align*}
    \mathcal{T}_k V(s,\mu)=\min_{\phi} \sum_{\theta \in \Theta} \mu(\theta)\Big[&\Psi\big(C(s,a,\theta)+\gamma \sum_{s' \in \mathcal{S}, \mu^{sas'}\in\hat{\mathcal{M}}_k}P(s'|s,a,\theta)V(s',\mu^{sas'})\\
    & +\gamma \sum_{s' \in \mathcal{S}, \mu^{sas'}\notin\hat{\mathcal{M}}_k,\mu'\in\hat{\mathcal{M}}_k}P(s'|s,a,\theta)w(\mu',\mu^{sas'})V(s',\mu'),\phi\big)\Big].
\end{align*}
We then bound the difference between $Z_k$ and $Z$ as follows.

{\small
\begin{align*}
    & ||Z-Z_k||_{\infty} \\
    = & ||\sum_{s' \in \mathcal{S}, \mu^{sas'} \in \mathcal{M}} P(s'|s,a,\theta)V(s',\mu^{sas'}) - \sum_{s' \in \mathcal{S}, \mu^{sas'}\in\hat{\mathcal{M}}_k}P(s'|s,a,\theta)V(s',\mu^{sas'}) \\
    & -  \sum_{s' \in \mathcal{S}, \mu^{sas'}\notin\hat{\mathcal{M}}_k,\mu'\in\hat{\mathcal{M}}_k}P(s'|s,a,\theta)w(\mu',\mu^{sas'})V(s',\mu')||_{\infty}\\
    = & ||\sum_{s' \in \mathcal{S}, \mu^{sas'} \notin \hat{\mathcal{M}}_k} P(s'|s,a,\theta)V(s',\mu^{sas'}) - \sum_{s' \in \mathcal{S}, \mu^{sas'}\notin\hat{\mathcal{M}}_k,\mu'\in\hat{\mathcal{M}}_k}P(s'|s,a,\theta)w(\mu',\mu^{sas'})V(s',\mu')||_{\infty} \\
    \leq & ||\sum_{s' \in \mathcal{S}, \mu^{sas'} \notin \hat{\mathcal{M}}_k} P(s'|s,a,\theta)V(s',\mu^{sas'}) - \big(\sum_{s' \in \mathcal{S}, \mu^{sas'} \notin \hat{\mathcal{M}}_k, ||\mu^{sas'}-\mu'|| \leq \epsilon'}P(s'|s,a,\theta)(1-\delta\epsilon')V(s',\mu')\\
    & + \sum_{s' \in \mathcal{S}, \mu^{sas'} \notin \hat{\mathcal{M}}_k, ||\mu^{sas'}-\mu'|| > \epsilon'}P(s'|s,a,\theta)\delta\epsilon' V(s',\mu')\big)||_{\infty}\\
    \leq & ||\sum_{s' \in \mathcal{S}, \mu^{sas'} \notin \hat{\mathcal{M}}_k} P(s'|s,a,\theta)V(s', \mu^{sas'}) - \sum_{s' \in \mathcal{S}, \mu^{sas'} \notin \hat{\mathcal{M}}_k, ||\mu^{sas'}-\mu'|| \leq \epsilon'}P(s'|s,a,\theta)V(s',\mu')||_{\infty}\\
    & + \delta \epsilon' ||\sum_{s' \in \mathcal{S}, \mu^{sas'} \notin \hat{\mathcal{M}}_k, ||\mu^{sas'}-\mu'|| \leq \epsilon'}P(s'|s,a,\theta) V(s',\mu')-\sum_{s' \in \mathcal{S}, \mu^{sas'} \notin \hat{\mathcal{M}}_k, ||\mu^{sas'}-\mu'|| > \epsilon'}P(s'|s,a,\theta) V(s',\mu')||_{\infty} \\
    \leq & \epsilon' L_{\mu} + \delta \Delta \epsilon':=A\epsilon'.
\end{align*}
}The first inequality holds because the optimal weight in (5) satisfies: $\exists \delta > 0$ s.t. $w(\mu',\mu^{sas'}) \geq 1-\delta \epsilon'$, where $||\mu'-\mu^{sas'}||\leq \epsilon'$ for some constant $\delta$ that depends on $K$, which is the cardinality of the parameter space $\Theta$. It essentially implies that the closest belief point $\mu'$ that is within the $\epsilon'$-distance will always be included in the convex combination of $\mu^{sas'}$, and the weight is close to 1 as to minimize the weighted euclidean norm. The second inequality holds because of the triangle inequality. In Theorem~\ref{thm: lower bound} we have shown $V^{\pi}(s,\mu)$ is concave in $\mu$ for any policy $\pi$. Since concavity implies Lipschitz continuity, we have $||V(s',\mu^{sas'})-V(s',\mu')||_{\infty}\leq L_{\mu}||\mu^{sas'}-\mu'||\leq L_{\mu} \epsilon'$, where $L_{\mu}$ is the Lipschitz constant. Since we assume the cost is bounded, which implies the value function is also bounded, we have $||\sum_{s' \in \mathcal{S}, \mu^{sas'} \notin \hat{\mathcal{M}}_k, ||\mu^{sas'}-\mu'|| \leq \epsilon'}P(s'|s,a,\theta) V(s',\mu')-\sum_{s' \in \mathcal{S}, \mu^{sas'} \notin \hat{\mathcal{M}}_k, ||\mu^{sas'}-\mu'|| > \epsilon'}P(s'|s,a,\theta) V(s',\mu')||_{\infty}\leq\Delta$ for some constant $\Delta>0$. So the third inequality holds. Next, note that $\Psi(Z,\phi)$ is convex in $(z,\phi)$, it remains to be convex in $Z$ after taking the minimum over $\phi$. Since convexity implies Lipschitz continuity (with Lipschitz constant $L_z$), we have 
\begin{align*}
    ||\mathcal{T}_k V - \mathcal{T} V||_{\infty} \leq A \gamma L_{z} \epsilon'.
\end{align*}
Last, denote by $\hat{\pi}_k$ the policy obtained in the $k$-th iteration in Algorithm~\ref{alg: full algorithm}, we have 
\begin{align*}
    ||\hat{V}^{\hat{\pi}_k}-V^{*}||_{\infty}&=||\mathcal{T}_k \hat{V}^{\hat{\pi}_k} - \mathcal{T} V^{*} ||_{\infty} \\
    & = ||\mathcal{T}_k \hat{V}^{\hat{\pi}_k} - \mathcal{T} \hat{V}^{\hat{\pi}_k} + \mathcal{T} \hat{V}^{\hat{\pi}_k} - \mathcal{T} V^{*}||_{\infty} \\
    & \leq ||\mathcal{T}_k \hat{V}^{\hat{\pi}_k} - \mathcal{T} \hat{V}^{\hat{\pi}_k}||_{\infty} + ||\mathcal{T} \hat{V}^{\hat{\pi}_k} - \mathcal{T} V^{*}||_{\infty}\\
    & \leq A \gamma L_{z} \epsilon' + \gamma || \hat{V}^{\hat{\pi}_k} - V^{*}||_{\infty}.
\end{align*}
Therefore, we have $||\hat{V}^{\hat{\pi}_k}-V^{*}||_{\infty} \leq \frac{A \gamma L_{z} \epsilon'}{1-\gamma}$. By setting $\epsilon'=\frac{1-\gamma}{A \gamma L_{z}}\epsilon$, the proof is complete.
\end{proof}

\section{Implementation Details}
All algorithms are implemented in Python and run on a 1.4 GHz Intel Core i5 processor with 8 GB memory. 
\subsection{Offline Path Planning}
\begin{figure}[ht]
    \centering
    \includegraphics[width=0.4\textwidth]{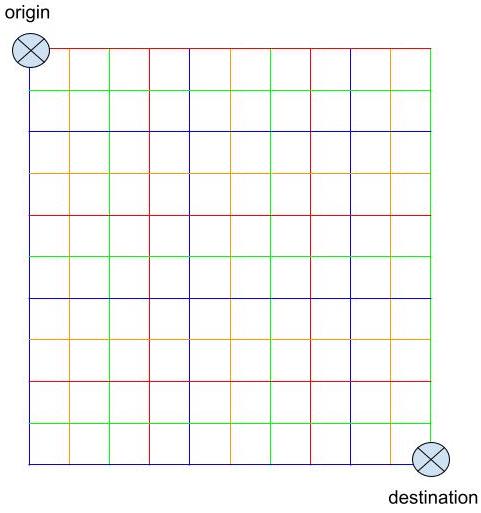}
    \caption{Path planning terrain map. Colors indicate the road types as follows--blue: highway, red: main road, orange: street, green: lane.}
    \label{figure: path planning map}
\end{figure} 

An autonomous car (agent) navigates a two-dimensional terrain map represented by a 10 by 10 grid along roads to the destination, as shown in Figure \ref{figure: path planning map}. The agent chooses from four actions \{up, down, left, right\}, as long it remains on the road. There are four types of roads: \{highway, main road, street, lane\}. The traffic time $\xi^{T}_i$ in each type of road is assumed to be independent and follows exponential distribution with different rate, denoted by $\theta_i^{T}, i=1,\cdots,4$, where $T$ stands for traffic time. Specifically, the true rates are $\theta_1^{T}=1$, $\theta_2^{T}=0.5$, $\theta_3^{T}=0.2$, and $\theta_4^{T}=0.1$ but unknown to the agent. We view the parameter as a random variable, whose value is assumed to be within the following finite set $\{0.05,0.1,0.2,0.25,0.3,0.4,0.5,0.6,0.7,0.8,0.9,1,1.5,2,2.5\}$. $\xi_i^{A} \in \{0,1\}, i=1,\cdots,4$
denotes whether there is car accident in each type of road, where $A$ stands for accident. The probability of car accident happening in each type of road is also assumed to be independent, denoted by $\theta_i^{A}$. Specifically, the true probabilities are $\theta_1^{A}=0.3$, $\theta_2^{A}=0.2$, $\theta_3^{A}=0.1$ and $\theta_4^{A}=0.05$ but unknown to the agent. We view the parameter as a random variable, whose value is assumed to be within the following finite set $\{0.05,0.1,0.15,0.2,0.25,0.3,0.35,0.4,0.45,0.5\}$. When there is an car accident, the agent receives a constant cost $T_A=10$ and makes no transition. Otherwise, the agent transitions to the next road depending on the action it takes and receives the cost, which is the traffic time for traversing that type of road. The agent stops when it reaches the destination. The discount factor $\gamma = 0.95$. The agent is given a historical dataset $\mathcal{H}_0$ of size $N$ containing past traffic times and car accident logs, and uses the given dataset to construct the prior for the transition rate and probability of car accident. Other parameters are as follows: number of points to be added at each iteration $n=20$, threshold $\epsilon=0.1$.  

\subsection{Multi-item Inventory Control}
The warehouse manager (agent) decides how much to replenish from the set $\{0,1,\cdots,S_i-s_i\}$ for each item $i \in [K]$ at each time stage, where $K=5$ is the number of different items, $S_i=100$ is the storage capacity for each item $i$, $s_i$ is the current inventory level for each item $i$. The customer demand is a random vector $\xi=(\xi^1,\cdots,\xi^K)$ with each $\xi_i$ following a Poisson distribution with parameter $\theta^i$. The true parameter is $\theta^1 = 10$, $\theta^2 = 15$, $\theta^3 = 20$, $\theta^4 = 25$, and $\theta^5 = 30$ but unknown to the agent. We view the parameter as a random variable, whose value is assumed to be within the following finite set $\{5, 6, 7, \cdots, 33, 34, 35\}$. The state transition is given by $s_{t+1}=\max(s_t+a_t-\xi_t, 0)$, where $a_t$ is the amount of inventory to be replenished. Inventory level is not allowed to drop below zero (no backlog). When the customer demand is higher than the supply, there is a penalty cost $p$ for each unit of unsatisfied demand. When the customer is lower than the supply, there is a holding cost $h$ for each unit of overstock. In particular, for different items, $p_1=4$, $p_2=5$, $p_3=6$, $p_4=7$, $p_5=8$, $h_1=2$, $h_2=3$, $h_3=4$, $h_4=5$, $h_5=6$. The cost function at each stage is then given by $C(s_t,a_t,\xi_t)=h^{T} \cdot \max(s_t + a_t - \xi_t, 0) + p^{T} \cdot \max(\xi_t - s_t - a_t, 0)$. The discount factor $\gamma = 0.95$. The agent starts with 0 inventory and is given a historical dataset $\mathcal{H}_0$ of size $N$ containing past customer demands for different items, and uses the given dataset to construct the prior for the rate parameter. Other parameters are as follows: number of points to be added at each iteration $n=20$, threshold $\epsilon=0.1$.  

\subsection{DR-MDP Details}
The DR-MDP approach, or Distributionally Robust Markov Decision Process, is a method for decision making under uncertainty where the ambiguity set, or the set of possible distributions for the uncertain parameters, is constructed using prior knowledge about the probabilistic information. However, this prior knowledge is not always readily available from a given data set, making the construction of the ambiguity set difficult in some cases.

We note that the Bayesian Risk Optimization (BRO) approach has a distributionally robust optimization (DRO) interpretation. In particular, for a static stochastic optimization problem, it has been shown in \citet{wu2018bayesian} that the BRO formulation with the risk functional taken as Value-at-Risk (VaR) with a confidence level of $100\%$ is equivalent to a DRO formulation with the ambiguity set constructed for the uncertain parameter, $\theta$. This means that BRO and DRO can be used interchangeably, depending on the problem at hand and the level of uncertainty and prior knowledge about the parameters. Therefore, for a given problem when prior knowledge about the probabilistic information is not readily available, we adapt DR-MDP to our considered problem as follows: we use samples of the uncertain parameter, $\theta$, drawn from the posterior distribution computed from a given data set. This allows us to construct an ambiguity set for $\theta$ using the available data, instead of relying on prior knowledge. Once we have samples of $\theta$, we can obtain the optimal policy that minimizes the total expected cost under the most adversarial $\theta$ among the samples.

\subsection{Bilevel Optimization}
For the considered class of convex risk measures, we can rewrite the above formulation as a bilevel difference convex program:
\begin{align}
    & \min_V \quad -\sum_{s \in \mathcal{S},\mu \in \mathcal{M}} \alpha(s,\mu) V(s,\mu) \label{DCP_2} \\
    & \text{s.t.} V(s,\mu) - \min_{\phi} \mathbb{E}_{\mu}\Big[\Psi\big(\mathbb{E}_{\theta}[C(s,a,\xi) + \gamma V(s',\mu')], \phi\big)\Big] \leq 0, \forall a \in \mathcal{A}, s \in \mathcal{S}, \mu \in \mathcal{M} \nonumber.
\end{align}

We show the bilevel DCP can be reduced to a single-level DCP. Specifically, we show this transformation for the exact bilevel DCP in \eqref{DCP_2}, and the same technique can be applied to the approximate algorithm. 

Consider a general bilevel optimization problem:
\begin{align}\label{general bilevel optimization}
    & \min_{x_u, x_l} F(x_u, x_l) \\
    & \text{s.t. } x_l \in \argmin_{x_l} \{f(x_u,x_l): g(x_u,x_l) \leq 0\} \nonumber \\
    & ~~~~~~ G(x_u,x_l) \leq 0 \nonumber
\end{align}
where $x_u$ is the upper-level variable, $x_l$ is the lower-level variable, $G$ denotes the upper-level constraints, $g$ denotes the lower-level constraints, $F$ denotes the upper-level objective function, $f$ denotes the lower-level objective function. The Karush-Kuhn-Tucker (KKT) conditions are a set of necessary and sufficient conditions for a solution to be optimal in a convex optimization problem. When the lower-level problem in a bilevel optimization problem is convex and sufficiently regular, the KKT conditions can be used to reformulate the problem as a single-level constrained optimization problem, which is typically easier to solve. The general bilevel optimization problem \ref{general bilevel optimization} can then be reduced to the following single-level optimization:
\begin{align*}
    & \min_{x_u, x_l} F(x_u, x_l) \\
    & \text{s.t. } G(x_u,x_l) \leq 0 \\
    & ~~~~~~ \nabla_{x_l} L(x_u,x_l,\lambda)=0 \\
    & ~~~~~~ g(x_u,x_l) \leq 0 \\
    & ~~~~~~ \lambda g(x_u,x_l)=0 \\
    & ~~~~~~ \lambda \geq 0
\end{align*}
where $L(x_u,x_l,\lambda)=f(x_u,x_l)+\lambda g(x_u,x_l)$ is the Lagrangian function. In the bilevel DCP \eqref{DCP_2}, $V$ is the upper-level variable and $\phi$ is the lower-level variable. The constraint in \eqref{DCP_2} can be rewritten as: 
\begin{align*}
    & \phi \in \argmin_{\phi \in \Phi} \mathbb{E}_{\mu} [\Psi(\mathbb{E}_{\theta}[C(s,a,\xi)+\gamma V(s',\mu')], \phi)] \\
    & V(s,\mu) - \mathbb{E}_{\mu} [\Psi(\mathbb{E}_{\theta}[C(s,a,\xi)+\gamma V(s',\mu')], \phi)] \leq 0.
\end{align*}

Since the lower-level problem is convex, we can reformulate the bilevel DCP \eqref{DCP_2} as a single-level DCP problem.
\begin{align*}
    & \min_V - \sum_{s \in \mathcal{S},\mu \in \mathcal{M}} \alpha(s,\mu) V(s,\mu) \\
    & \text{s.t.} V(s,\mu) - \mathbb{E}_{\mu}[\Psi(\mathbb{E}_{\theta}[C(s,a,\xi) + \gamma V(s',\mu')], \phi)] \leq 0 \\
    & ~~~~~ \nabla_{\phi} \mathbb{E}_{\mu}[\Psi(\mathbb{E}_{\theta}[C(s,a,\xi) + \gamma V(s',\mu')], \phi)] = 0 \\
    & ~~~~~~ \forall a \in \mathcal{A}, s \in \mathcal{S}, \mu \in \mathcal{M}.
\end{align*}

\clearpage

\subsection{Additional Experiments}
Results for the multi-item inventory control problem are reported in Table \ref{table: inventory control small} and Table \ref{table: inventory control large}. 

\begin{table}[ht]
\centering
\resizebox{14.2cm}{!}{%
\begin{tabular}{ccccc}
\hline
Approach                   & time (sec)    & expected cost  & CVaR ($\alpha=0.95$) cost & CVaR ($\alpha=0.8$) cost \\ \hline
ABDCP-EXP (CALP)           & 1374.59(0.24) & 3478.92(15.03) & 4363.56                   & 4025.23                  \\ \hline
ABDCP-CVaR ($\alpha=0.95$) & 4109.21(0.35) & 3072.84(10.24) & 3651.75          & 3472.44         \\ \hline
ABDCP-CVaR ($\alpha=0.8$)  & 4087.14(0.32) & 2831.12(12.37) & 3782.04          & 3517.30                  \\ \hline
DR-MDP                     & 140.76(0.13)  & 3963.56(9.21)  & 4424.69                   & 4231.12                  \\ \hline
Nominal                    & 139.89(0.08)  & 3987.50(18.39) & 4974.57                   & 4611.16                  \\ \hline
\end{tabular}
}
\caption{Results for multi-item inventory control problem. Running time for each replication, expected cost, and CVaR cost at different risk levels $\alpha$ are reported for different algorithms. Standard errors are reported in parentheses. Number of data points is $N=10$. }
\label{table: inventory control small}
\end{table}

\begin{table}[!ht]
\centering
\resizebox{14.2cm}{!}{%
\begin{tabular}{ccccc}
\hline
Approach                   & time (sec)    & expected cost        & CVaR ($\alpha=0.95$) cost & CVaR ($\alpha=0.8$) cost \\ \hline
ABDCP-EXP (CALP)           & 1377.27(0.22)  & 1806.45(0.57)          & 1825.06                     & 1823.71                    \\ \hline
ABDCP-CVaR ($\alpha=0.95$) & 4114.93(0.34) & 1819.92(0.16)          & 1823.63            & 1821.70           \\ \hline
ABDCP-CVaR ($\alpha=0.8$)  & 4002.62(0.33) & 1817.21(0.19) & 1824.97                     & 1822.39                    \\ \hline
DR-MDP                     & 138.26(0.11)    & 1826.03(0.12)          & 1828.80                     & 1827.62                   \\ \hline
Nominal                    & 136.38(0.10)    & 1802.34(1.28)        & 1836.04                   & 1828.99                 \\ \hline
\end{tabular}
}
\caption{Results for multi-item inventory control problem. Running time for each replication, expected cost, and CVaR cost at different risk levels $\alpha$ are reported for different algorithms. Standard errors are reported in parentheses. Number of data points is $N=1000$. }
\label{table: inventory control large}
\end{table}

\end{document}